
\documentclass[reqno,12pt,dvipsnames,svgnames]{amsart}


\usepackage{amsmath}
\usepackage{amssymb}
\usepackage{amsfonts}
\usepackage{amsthm}
\usepackage[foot]{amsaddr}
\usepackage{mathtools}
\usepackage{bbm}
\usepackage{tikz}
\usepackage{xcolor}
\usetikzlibrary{decorations.pathreplacing,decorations.markings}

\mathtoolsset{%
}
\usepackage{relsize}

\usepackage[utf8]{inputenc}
\usepackage[T1]{fontenc}

\usepackage{libertine}
\usepackage[libertine,libaltvw,cmintegrals
]{newtxmath}


\usepackage{dsfont}

\usepackage{mathrsfs}






\usepackage
{xcolor}
\colorlet{MyBlue}{DodgerBlue!60!Black}
\colorlet{MyGreen}{DarkGreen!85!Black}


\usepackage{fullpage}
\newcommand{\afterhead}{.}

\usepackage[font=small,labelfont=bf]{caption}
\captionsetup[algorithm]{labelfont={bf,sf,normalsize},font={small},labelsep=period}
\usepackage{subfigure}

\usepackage{acronym}
\usepackage{latexsym}
\usepackage{longtable}
\usepackage{wasysym}
\usepackage{xspace}
\usepackage{enumitem}

\usepackage[authoryear,comma]{natbib}

\usepackage{hyperref}
\hypersetup{
colorlinks=true,
linktocpage=true,
pdfstartview=FitH,
breaklinks=true,
pdfpagemode=UseNone,
pageanchor=true,
pdfpagemode=UseOutlines,
plainpages=false,
bookmarksnumbered,
bookmarksopen=false,
bookmarksopenlevel=1,
hypertexnames=true,
pdfhighlight=/O,
hyperfootnotes=true,
nesting=true,
frenchlinks,
urlcolor=MyBlue!60!black,linkcolor=MyBlue!70!black,citecolor=DarkGreen!70!black, 
pagecolor=RoyalBlue,
pdftitle={},
pdfauthor={},
pdfsubject={},
pdfkeywords={},
pdfcreator={pdfLaTeX},
pdfproducer={LaTeX with hyperref}
}

\def\EMAIL#1{\email{\href{mailto:#1}{\texttt{\upshape #1}}}}

\numberwithin{equation}{section} 
\usepackage[sort&compress,capitalize,nameinlink]{cleveref}
\crefname{app}{Appendix}{Appendices}

\crefrangeformat{equation}{\upshape(#3#1#4)\textendash(#5#2#6)}

\usepackage[textwidth=30mm]{todonotes}

\usepackage{soul}
\setstcolor{red}
\sethlcolor{SkyBlue}

\newcommand{\debug}[1]{#1}


\usepackage{algorithm}
\usepackage{algpseudocode}

\theoremstyle{plain}
\newtheorem{theorem}{Theorem}

\newtheorem{corollary}[theorem]{Corollary}
\newtheorem*{corollary*}{Corollary}
\newtheorem{lemma}[theorem]{Lemma}
\newtheorem{proposition}[theorem]{Proposition}

\theoremstyle{definition}
\newtheorem{definition}[theorem]{Definition}
\newtheorem*{definition*}{Definition}

\newtheorem*{assumption*}{Assumption}
%
\numberwithin{theorem}{section}



\theoremstyle{remark}
\newtheorem{remark}[theorem]{Remark}
\newtheorem*{remark*}{Remark}
\newtheorem*{notation*}{Notational remark}
\newtheorem{example}[theorem]{Example}
\newtheorem*{case*}{Case}


\DeclarePairedDelimiter{\braces}{\{}{\}}
\DeclarePairedDelimiter{\bracks}{[}{]}
\DeclarePairedDelimiter{\parens}{(}{)}

\DeclarePairedDelimiter{\abs}{\lvert}{\rvert}

\DeclarePairedDelimiter{\ceil}{\lceil}{\rceil}
\DeclarePairedDelimiter{\floor}{\lfloor}{\rfloor}

\DeclarePairedDelimiterX{\braket}[2]{\langle}{\rangle}{#1,#2}

\DeclarePairedDelimiterX{\inner}[2]{\langle}{\rangle}{#1,#2}
\DeclarePairedDelimiterX{\setdef}[2]{\{}{\}}{#1:#2}

\DeclarePairedDelimiterXPP{\probof}[1]{\Prob}{(}{)}{}{%

#1}

\DeclarePairedDelimiterXPP{\exof}[1]{\Expect}{[}{]}{}{%

#1}


\DeclareMathOperator{\expo}{\debug {e}}

\newcommand{\naturals}{\mathbb{\debug N}}
\newcommand{\reals}{\mathbb{\debug R}}

\newcommand{\run}{\debug n}



\newcommand{\stime}{\debug \tau}

\DeclareMathOperator{\Expect}{\mathsf{\debug E}}

\DeclareMathOperator{\Poisson}{\mathsf{\debug{Poisson}}}
\DeclareMathOperator{\Prob}{\mathsf{\debug P}}

\DeclareMathOperator{\Var}{\mathsf{\debug{Var}}}


\newcommand{\per}{\debug t}
\newcommand{\peralt}{\debug s}
\newcommand{\Per}{\debug T}



\newcommand{\play}{\debug i}

\newcommand{\pA}{\mathrm{\debug{A}}}
\newcommand{\pB}{\mathrm{\debug{B}}}

\newcommand{\actA}{\debug a}
\newcommand{\actAalt}{\actA'}
\newcommand{\actAaltalt}{\actA''}
\newcommand{\actB}{\debug b}
\newcommand{\actBalt}{\actB'}
\newcommand{\actBaltalt}{\actB''}

\newcommand{\nactions}{\debug K}

\newcommand{\actions}{[\nactions]}

\newcommand{\eq}[1]{#1^{\ast}}

\newcommand{\potential}{\debug \Psi}
\newcommand{\potentialalt}{\potential'}
\newcommand{\potentials}{\mathcal{\debug P}}

\newcommand{\outcome}{\debug \Theta}
\newcommand{\outcomealt}{\debug \outcome'}

\newcommand{\cardNE}{\debug W}

\newcommand{\equil}{\debug{\boldsymbol{\eta}}}
\newcommand{\equilindex}{\debug \ell}
\newcommand{\rank}{\debug \Lambda}
\newcommand{\nerows}{\debug R}
\newcommand{\snerows}{\debug r}
\newcommand{\necolumns}{\debug C}
\newcommand{\snecolumns}{\debug c}
\newcommand{\tnecolumns}{\widetilde{\necolumns}}
\newcommand{\ematrix}{\debug M}
\newcommand{\greene}{\debug G}
\newcommand{\tgreene}{\widetilde{\greene}}
\newcommand{\sgreene}{\debug g}
\newcommand{\succprobrow}{\debug \rho}
\newcommand{\succrow}{\mathfrak{\debug{R}}}
\newcommand{\succprobcol}{\debug \kappa}
\newcommand{\succcol}{\mathfrak{\debug{C}}}
\newcommand{\noresample}{\mathfrak{\debug{U}}}
\newcommand{\noresampleprob}{\debug u}
\newcommand{\sampleZ}{\debug Z}
\newcommand{\sampleY}{\debug Y}
\newcommand{\sampleX}{\debug X}
\newcommand{\sigmaf}{\mathcal{\debug{F}}}
\newcommand{\nsteps}{\debug D}
\newcommand{\nstepsl}{\debug \ell}
\newcommand{\paralpha}{\debug \alpha}
\newcommand{\bern}{\debug L}
\newcommand{\constT}{\debug T}
\newcommand{\funca}{\debug \alpha}
\newcommand{\funcb}{\debug \beta}
\newcommand{\funcc}{\debug \gamma}
\newcommand{\funch}{\debug h}
\newcommand{\funcphi}{\debug \varphi}
\newcommand{\funcPhi}{\debug \Phi}
\newcommand{\profile}{\boldsymbol{\debug{x}}}
\newcommand{\zetasym}{\debug \zeta}
\newcommand{\sumprod}{\debug L}

\newcommand{\LKzeta}{\sumprod_{\nactions}(\zetasym)}
\newcommand{\harmonic}{\debug H}

\DeclareMathOperator{\BoA}{\mathsf{\debug{BoA}}}
\DeclareMathOperator{\BRD}{\mathsf{\debug{BRD}}}

\DeclareMathOperator{\NE}{\mathsf{\debug{NE}}}

\newcommand{\argdot}{\,\cdot\,}
\newcommand{\bigoh}{\mathcal{\debug{O}}}

\newcommand{\ind}{\mathds{\debug 1}}

\newcommand{\smalloh}{\debug o}

\newcommand{\ie}{i.e., }
\newcommand{\eg}{e.g., }
\newcommand{\iid}{i.i.d.\ }

\newcommand{\wlg}{w.l.o.g, }

\DeclareMathOperator*{\argmin}{arg\,min} 


\newcommand{\coupont}{\debug T}
\newcommand{\genact}{\debug k}
\newcommand{\evented}{\mathcal{\debug A}_{\varepsilon,\delta}}



\newacro{BRD}{best response dynamic}
\newacro{bRD}{better response dynamic}
\newacro{NE}{Nash equilibrium}
\newacroplural{NE}[NE]{Nash equilibria}
\newacro{PNE}{pure Nash equilibrium}
\newacroplural{PNE}[PNE]{pure Nash equilibria}
\newacro{MNE}{mixed Nash equilibrium}
\newacroplural{MNE}[MNE]{mixed Nash equilibria}
\newacro{PFNE}{prior-free Nash equilibrium}
\newacroplural{PFNE}[PFNE]{prior-free Nash equilibria}

\newacro{KKT}{Karush\textendash Kuhn\textendash Tucker}

\newacro{FIP}{finite improvement property}
\newacro{CLT}{central limit theorem}

\newacro{OPG}{ordinal potential game}
\newacro{SOP}{strictly ordinal potential}
\newacro{BoA}{basin of attraction}

\newacro{ODE}{ordinary differential equation}

\begin{document}

\title[Random Ordinal Potential Games]{Basins of Attraction in Two-Player Random Ordinal Potential Games}

\author{Andrea Collevecchio$^1$}
\address{$^1$ School of Mathematics, and Monash Data Futures Institute, Monash University, Melbourne, Victoria 3800, Australia.}
\EMAIL{andrea.collevecchio@monash.edu}

\author{Hlafo Alfie Mimun$^2$}
\address{$^2$ Dipartimento di Economia e Finanza, Luiss University, Viale Romania 32, 00197 Roma, Italy.}
\EMAIL{hmimun@luiss.it}

\author{Matteo Quattropani$^3$}
\address{$^3$ Dipartimento di Matematica ``Guido Castelnuovo'', Sapienza Universit\`a di Roma, Piazzale Aldo Moro 5, 00185  Roma, Italy.}
\EMAIL{matteo.quattropani@uniroma1.it}

\author{Marco Scarsini$^2$}
\EMAIL{marco.scarsini@luiss.it}

\begin{abstract}
We consider the class of two-person ordinal potential games where each player has the same number of actions $\nactions$.
It is well-known that each game in this class admits at least one \acl{PNE} and the \acl{BRD} converges to one of these \aclp{PNE};
which one depends on the starting point.
So, each \acl{PNE} has a \acl{BoA}.

We pick uniformly at random one game from this class and we study the joint distribution of the sizes of the basins of attraction.
We provide an asymptotic exact value for the expected \acl{BoA} of each \acl{PNE}, when the number of actions $\nactions$ goes to infinity.
  
\bigskip\noindent  
\emph{Key words.} 
Best-response dynamics, games with random payoffs, ordinal potential games, common interest games, pure Nash equilibria, discrete random structures.

\medskip\noindent
\emph{MSC2020 Classification}: Primary 91A14, 91A05. Secondary: 91A26, 60C05.

\end{abstract}

\maketitle


\section{Introduction}
\label{se:introduction}

There exists a huge literature dealing  with dynamics in games and their ability to reach a \ac{PNE}. 
Various impossibility results have been proved, showing that no simple adaptive dynamics can reach a \ac{PNE} in every game that has one.
See, for instance, the papers in the collection by \citet{HarMas:WS2013} and the more recent articles by \citet{MilPapPilSpe:PNAS2023,HakMilPapPil:SAGT2024}, among many others. 
To determine the actual relevance of these impossibility results, \citet{JohSavScoTar:arXiv2023}  considered the class of ordinal games with fixed number of players and fixed action sets for each player and looked at the frequency of games for which adaptive dynamics fail to converge to a Nash equilibrium even if it exists. 
They proved that this frequency vanishes as the number of players diverges, provided the number of actions for each player do not grow too fast. 
Although their results are fully combinatorial, they can be framed within the literature of games with random payoffs.  
Most papers about games with random payoffs focus on the number of \acp{PNE} and its distribution. 
More recently, the attention has shifted to the behavior of adaptive dynamics in games with random payoffs. 
Our paper can be seen as part of this stream of literature, in the sense that we consider games with random payoffs and we study \ac{BRD} for a class of such games. 
In particular, we focus on ordinal potential games. 
It is well known that all games in this class admit \acp{PNE} and \ac{BRD} converges to one of these equilibria.
The issue we want to study in this paper is to which equilibria is a \ac{BRD} more likely to converge.

\subsection{Our contribution}
\label{suse:contribution}
In this paper we consider a particular class of two-player normal-form games, called \aclp{OPG}. 
These games always admit \aclp{PNE}, and the common learning dynamics converge to a \acl{PNE}.
In this class of games, all the strategic features of the game are embedded in the potential function, \ie it is sufficient to know the potential function to pinpoint  the \aclp{PNE} of an \acl{OPG}.
Therefore, if we want to pick an \acl{OPG} at random, we can put a suitable probability measure on the space of potentials. 
For the sake of simplicity, we will assume that the two players have the same number of actions $\nactions$.
We will make the strong assumption that all outcomes in the potential  are strictly comparable, \ie no equivalences are allowed. 
With this assumption, a random \acl{OPG} is equivalent to a random permutation on of the first $\nactions^{2}$ integers.

The structure of the game implies that, for a given realization of the potential, if we start at a fixed action profile and we perform \acl{BRD} starting with the first player (i.e., the player choosing the row), then we reach a \acl{PNE}.

The goal of this paper is to determine the probability of reaching the  \acl{PNE} with the smallest or second smallest or  third smallest potential, etc., when also the starting action profile is chosen uniformly at random.
This translates into finding the expected size of the basin of attraction of each \acl{PNE} of the game. 
Solving this problem exactly for a specific choice of $\nactions$ is a quite complicate endeavor. 
We adopt an asymptotic approach and study the problem as $\nactions\to\infty$.
We use a result of \citet{MimQuaSca:GEB2024} according to which the fraction of the number of \aclp{PNE} over $\nactions$ converges to $1/2$.
Based on this, we see that the distribution of the rescaled ranking of \acl{PNE} found by the \acl{BRD} converges to an absolutely continuous distribution supported on the interval $[0,1/2]$, which we characterize explicitly.


\subsection{Related literature}

As mentioned before, most of the initial contributions on random games, starting with \citet{Gol:AMM1957}, assumed \iid payoffs with a continuous distribution, and studied the existence of \aclp{PNE} and the distribution of their random number.
Two papers proved that, under the above hypotheses, the number of \aclp{PNE} converges in distribution to a Poisson distribution with parameter $1$ as either the number of players diverges \citep{ArrGolGor:AP1989} or the number of actions for each player diverges \citep{Pow:IJGT1990}.

\citet{RinSca:GEB2000} retained the assumption of independence for the random payoff vectors corresponding to different action profiles, but allowed for dependence of payoffs within the same action profile (modeled by an exchangeable multinormal distribution).
They proved an interesting phase transition in the correlation coefficient as either the number of players or the number of actions diverges.
When correlation is negative, the number of \aclp{PNE} converges to $0$, when the correlation is null, it converges to $\Poisson(1)$, when the correlation is positive, it diverges and a central limit theorem holds. 

More recently, several papers started studying the behavior of \acl{BRD} in games with random payoffs.
\citet{AmiColScaZho:MOR2021} 
studied games with many players and two actions for each player.
They assumed \iid payoffs, but allowed distributions to have atoms.
They proved that the number of \aclp{PNE} diverges exponentially fast in a way that depends on the parameter that corresponds to the size of the atoms.
Moreover, they studied the behavior of \acl{BRD} and proved some phase transitions in the above parameter.
To achieve their results, they used some percolation techniques.
Their results were recently generalized by \citet{ColNguZho:arXiv2024}.
\citet{AmiColHam:ORL2021} compared the asymptotic behavior of \acl{BRD} and \acl{bRD} for two-player random games with \iid payoffs with a continuous distribution.
\citet{HeiJanMunPanScoTarWie:IJGT2023} studied the role of the playing sequence in the convergence to equilibrium of \acl{BRD}.
\citet{JohSavScoTar:arXiv2023} studied the model of \citet{AmiColScaZho:MOR2021}  with no atoms and more than two actions per player, and allow this number to grow, although at a lower rate than the number of players.
For an extensive list of  contributions on random games, we refer to the papers by \citet{AmiColScaZho:MOR2021} and \citet{JohSavScoTar:arXiv2023}.

\citet{PraTar:arXiv2024} introduced the concept of satisficing equilibria, which corresponds to the idea that players play one of their $k$ best pure actions, not necessarily the best.
The proved that  in almost all games, there exist satisficing equilibria in which all but one player use best response and the remaining player plays at least a second-best action.
Various aspects of games with random payoffs were recently studied in \citet{HarRinWei:AAP2008,HolMarMar:PTRF2019,AloRudYar:arXiv2021,WieHei:DGA2022,PeiTak:IJGT2023,FlePreSuo:MOR2023,GarZil:MOR2023,BhaKarPodRoy:EJP2023,NewSaw:2024,SepZil:arXiv2024}.

In a paper that, similarly to ours, focuses on the quality of different equilibria, \citet{CanSakLinVarPil:arXiv2024} studied techniques that can be used to steer players towards one of several equilibria in a game in some specific class, \eg coordination games. 
Their approach does not consider games with random payoffs.
\citet{HakMilPapPil:SAGT2024} studied 
the problem of computing the asymptotic behavior of noisy replicator dynamics as a limit distribution
over the sink equilibria of a game.
The authors generated games with random utilities to carry out randomized experiments of running noisy replicator dynamics.
\citet{KiaPedRyaSmiZab:INFORMSJOfrth}  studied a variation of fictitious play called Monte Carlo fictitious play (MCFP) and proved that, with the aid of an auxiliary game, it converges to a \acl{PNE} in games of common interest. 

The problem of learning equilibria in repeated games with a large number of actions was studied by \citet{CheCheFosHaz:arXiv2024} in the setting of AI Safety via Debate.

The idea of ordinal games goes back at least to \citet{Fis:IJGT1978}, who used the term ``stochastic dominance games'' to identify them. 
They were then studied by several authors \citep[see, \eg][]{CruSim:JOTA2000,Bon:LOFT2008,GafSal:2015}.
The computational complexity of their equilibria was recently studied by \citet{Con:AAAAICAI2024}.
\citet{PieBic:arXiv2024} proposed a new solution concept, called $\alpha$-Rank-collections, for games where  ordinal preferences are known, but there is uncertainty concerning players' cardinal utilities.

Ordinal potential games were introduced and studied by \citet{MonSha:GEB1996}, and then characterized by \citet{VooNor:GEB1997} and \citet{NorPat:2001}.


\section{The game}
\label{se:game}

\subsection{Ordinal games}
\label{suse:ordinal-games}

Throughout the paper, for every $\run\in\naturals$, we let $\bracks{\run}\coloneqq\braces{1,\dots,\run}$.
We study two-person normal form finite ordinal games where each player $\play\in\braces{\pA,\pB}$ has the same action set $\actions$ and a preference relation $\prec_{\play}$ over the outcomes of the game. 
We assume these preferences to be strict, \ie for all pairs of outcomes $\outcome,\outcomealt$, either $\outcome\prec_{\play}\outcomealt$ or $\outcomealt\prec_{\play}\outcome$.

A strategy profile $(\eq\actA,\eq\actB)$ is a \acfi{NE}\acused{NE} of the game if, for all $\actA,\actB\in\actions$, we have 
\begin{equation*}
\label{eq:Nash}
(\actA,\eq\actB)\prec_{\pA}(\eq\actA,\eq\actB) 
\quad\text{and}\quad
(\eq\actA,\actB)\prec_{\pB}(\eq\actA,\eq\actB).
\end{equation*}

A two-person normal form game is called \acfi{SOP}\acused{SOP} if there exists a \emph{potential function} $\potential \colon \actions\times\actions\to\reals$ such that, for each player $\play\in\braces{\pA,\pB}$, we have
\begin{equation}
\label{eq:ordinal-potential} 
\begin{split}
\outcome(\actA,\actB) &\prec_{\pA} \outcome(\actAalt,\actB)  
\iff
\potential(\actA,\actB) > \potential(\actAalt,\actB),\\
\outcome(\actA,\actB) &\prec_{\pB} \outcome(\actA,\actBalt)  
\iff
\potential(\actA,\actB) > \potential(\actA,\actBalt).
\end{split}
\end{equation}
To allow us to express asymptotic results in a simple way, the inequality in the potential is reversed with respect to the preference orders:
given the action of the other player, an outcome with a lower potential is better than an outcome  with a higher potential. 
For our class of \ac{SOP} games, \wlg the potential function can be chosen to be $1$-$1$ with values in $\bracks{\nactions^{2}}$.
In other words, the outcome whose potential is $1$ is the best, the one whose potential is $2$ is the second best, etc, all the way to the worst, whose potential is $\nactions^{2}$.
The class of  $1$-$1$ potentials with values in $\bracks{\nactions^{2}}$ will be denoted by $\potentials_{\nactions}$.
In the sequel we identify the function $\potential\in\potentials_{\nactions}$ with the $\actions\times\actions$ matrix of its values.

Two \ac{SOP} games having the same potential are strategically equivalent, \ie they have the same set of \aclp{NE}.
Any \ac{SOP} game is strategically equivalent to a game where $\prec_{\pA} \equiv \prec_{\pB}$.
The potential identifies the set of \acl{NE} of any \ac{SOP} game.
Therefore, we identify the equivalence class of strategically equivalent \acp{SOP} with their  potential $\potential\in\potentials_{\nactions}$.
Each \ac{NE} of a \ac{SOP} game is a local minimum of its potential.
The set of \acp{NE} of $\potential\in\potentials_{\nactions}$ will be denoted by $\NE_{\nactions}$.
Its cardinality $\abs*{\NE_{\nactions}}$ will be denoted by $\cardNE_{\nactions}$.
The set
\begin{equation}
\label{eq:set-equilibrium-potential} 
\eq\potential \coloneqq \braces*{\potential(\equil) \colon \equil \in \NE_{\nactions}} \subset \bracks*{\nactions^{2}}
\end{equation}
is the set of potential values corresponding to equilibrium profiles (in the sequel, they will be called equilibrium potentials).
The \acp{NE} in a \ac{SOP} can be ordered according to their potential, so that the equilibrium $\equil_{1}$ is the one with the smallest potential, $\equil_{2}$ is the one with the second smallest potential, etc.
The ranking of equilibrium $\equil$ is denoted by $\rank(\equil)$.


\section{Random \acl{SOP} games}
\label{se:random-SOP}

We now study random \ac{SOP} games. 
To achieve this, for every fixed $\nactions$, we consider a uniform distribution over the set $\potentials_{\nactions}$. 
A simple way to generate a uniform distribution on $\potentials_{\nactions}$ is to draw $\nactions^{2}$ \iid random variables with a continuous distribution function (\wlg a uniform distribution on $[0,1]$) and to consider their rankings, which, with probability $1$, are all different.

A random \ac{SOP} game has a random number of \acp{NE}, which is a random variable with values in $\actions$, as each column and raw can contain at most one \acp{NE}.
This random number is positive, since the potential always has a global minimum and cannot be larger than $\nactions$ because the preferences are strict, \ie the values that the potential can take are all distinct. 

It is easy to compute the average number of \acp{NE} in a random \ac{SOP}.
Each action profile is a \ac{NE} with probability $1/(2\nactions-1)$ because it is a \ac{NE} if the value of its potential is larger than all the values on the same row and column.
Therefore, the expected number of \acp{NE} is 
\begin{equation}
\label{eq:Exp-card-NE} 
\Expect\bracks*{\cardNE_{\nactions}} = \frac{\nactions^{2}}{2\nactions-1}.
\end{equation}
The following concentration result was proved by  {\citet[corollary~3.3]{MimQuaSca:GEB2024}}.

\begin{theorem}
[{\citet[corollary~3.3]{MimQuaSca:GEB2024}}]
\label{th:Expect-card-NE-conc}
For all $\delta>0$,
\begin{equation}  
\lim_{\nactions\to\infty} \Prob\parens*{\abs*{\frac{\cardNE_{\nactions}}{\nactions}-\frac{1}{2}}<\delta} = 1.
\end{equation}
\end{theorem}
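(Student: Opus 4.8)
The plan is to apply the second moment method. For each profile $(\actA,\actB)\in\actions\times\actions$ let $X_{\actA\actB}$ be the indicator that $(\actA,\actB)$ is a \ac{NE}, that is, that $\potential(\actA,\actB)$ is smaller than the other $2\nactions-2$ values of $\potential$ lying in row $\actA$ or column $\actB$, so that $\cardNE_{\nactions}=\sum_{\actA,\actB}X_{\actA\actB}$. The first moment is already available from \eqref{eq:Exp-card-NE}: $\Expect[\cardNE_{\nactions}]=\nactions^{2}/(2\nactions-1)$, whence $\Expect[\cardNE_{\nactions}/\nactions]=\nactions/(2\nactions-1)\to 1/2$. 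By Chebyshev's inequality the whole statement reduces to showing $\Var(\cardNE_{\nactions})=\smalloh(\nactions^{2})$, since then $\Var(\cardNE_{\nactions}/\nactions)\to 0$ and, combined with the convergence of the mean to $1/2$, this gives $\Prob(\abs{\cardNE_{\nactions}/\nactions-1/2}<\delta)\to 1$ for every fixed $\delta>0$.

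To estimate the variance I would classify ordered pairs of distinct profiles by how their ``crosses'' (row together with column) intersect. If two profiles lie in the same row or the same column, they cannot both be the minimum of that shared line, so $X_{\actA\actB}X_{\actAalt\actBalt}=0$ and such pairs contribute nothing. The only pairs that matter have $\actA\neq\actAalt$ and $\actB\neq\actBalt$; their two crosses overlap in exactly the two cells $(\actA,\actBalt)$ and $(\actAalt,\actB)$, and their union has $4\nactions-4$ cells. Modelling $\potential$ as $\nactions^{2}$ \iid $\Unif[0,1]$ values and conditioning on the values at $(\actA,\actB)$ and $(\actAalt,\actBalt)$, the joint minimality event splits into independent constraints (the two shared cells must both exceed both conditioned values, while the remaining $2\nactions-4$ cells of each cross exceed the value of that cross's candidate) and reduces to
\[
\Expect[X_{\actA\actB}X_{\actAalt\actBalt}]=\int_{0}^{1}\!\!\int_{0}^{1}\parens*{1-\max(x,y)}^{2}(1-x)^{2\nactions-4}(1-y)^{2\nactions-4}\diff x\diff y=\frac{1}{2(2\nactions-1)(\nactions-1)}.
\]

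Summing the contributions, namely $\nactions^{2}$ diagonal terms equal to $1/(2\nactions-1)$ and $\nactions^{2}(\nactions-1)^{2}$ off-diagonal pairs each equal to the value above, and subtracting $\Expect[\cardNE_{\nactions}]^{2}=\nactions^{4}/(2\nactions-1)^{2}$, one obtains
\[
\Var(\cardNE_{\nactions})=\frac{\nactions^{2}}{2\nactions-1}+\frac{\nactions^{2}(\nactions-1)}{2(2\nactions-1)}-\frac{\nactions^{4}}{(2\nactions-1)^{2}}.
\]
The main obstacle is precisely this last step: the off-diagonal sum and $\Expect[\cardNE_{\nactions}]^{2}$ are both of order $\nactions^{2}/4$, so the subtraction must be carried out exactly enough to witness the leading $\nactions^{2}$ terms cancelling, leaving $\Var(\cardNE_{\nactions})=\Theta(\nactions)$. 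This is what makes computing $\Expect[X_{\actA\actB}X_{\actAalt\actBalt}]$ to full precision (rather than merely asymptotically) essential. Once the cancellation is confirmed we have $\Var(\cardNE_{\nactions})=\bigoh(\nactions)=\smalloh(\nactions^{2})$, and the Chebyshev argument of the first paragraph completes the proof.
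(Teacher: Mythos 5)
Your proof is correct, but it cannot be ``the same approach as the paper'' for a simple reason: this paper does not prove the statement at all. \cref{th:Expect-card-NE-conc} is imported verbatim from \citet[corollary~3.3]{MimQuaSca:GEB2024}, and is used here as a black box. Your second-moment argument therefore supplies a self-contained proof where the present paper offers only a citation, and every step of it checks out. The classification of pairs is right: two distinct profiles sharing a row or column cannot both be strict minima of the shared line, so only the $\nactions^{2}(\nactions-1)^{2}$ ordered pairs with distinct rows and distinct columns contribute, their crosses overlapping in exactly the two cells $(\actA,\actBalt)$ and $(\actAalt,\actB)$. Your integral is also exactly right: writing it as twice the integral over $\braces{y<x}$ gives
\begin{equation*}
2\int_{0}^{1}(1-x)^{2\nactions-2}\,\frac{1-(1-x)^{2\nactions-3}}{2\nactions-3}\diff x
=\frac{2}{2\nactions-3}\parens*{\frac{1}{2\nactions-1}-\frac{1}{4\nactions-4}}
=\frac{1}{2(2\nactions-1)(\nactions-1)},
\end{equation*}
and plugging this into your variance formula, the advertised cancellation of the $\Theta(\nactions^{2})$ terms does occur:
\begin{equation*}
\Var\parens*{\cardNE_{\nactions}}
=\frac{\nactions^{2}}{2\nactions-1}+\frac{\nactions^{2}(\nactions-1)}{2(2\nactions-1)}-\frac{\nactions^{4}}{(2\nactions-1)^{2}}
=\frac{\nactions^{2}(\nactions-1)}{2(2\nactions-1)^{2}}\sim\frac{\nactions}{8},
\end{equation*}
which is $\smalloh(\nactions^{2})$, so Chebyshev's inequality together with $\Expect\bracks*{\cardNE_{\nactions}}/\nactions\to 1/2$ finishes the argument. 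Two minor points worth noting: first, your identification of the random permutation with rankings of \iid $\Unif[0,1]$ values is exactly the device the paper itself uses in \cref{se:random-SOP}, so nothing is lost there; second, the cited reference proves concentration by its own (more involved) route, and your computation in fact yields slightly more than the stated theorem, namely an explicit $\bigoh(1/(\nactions\delta^{2}))$ rate for $\Prob\parens{\abs{\cardNE_{\nactions}/\nactions-1/2}\ge\delta}$.
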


In what follows, we will often use the classical asymptotic notation: for positive sequences $f_n$ and $g_n$ we write $f_n=\bigoh(g_n)$ to mean that the ratio $f_n/g_n$ is asymptotically bounded; similarly, we write $f_n=\smalloh(g_n)$ to mean that the limiting ratio $f_n/g_n$ is zero. Finally, $f_n\sim g_n$ denotes the fact that the limiting ratio is one.



\section{\acl{BRD}}
\label{se:BRD}
 
The \acfi{BRD}\acused{BRD} is a learning algorithm taking as input a potential $\potential$ and a starting action profile $\parens{\actA_{0},\actB_{0}}$. 
For each $\per\ge0 $ we consider the process $\BRD(\per)$ on $\actions \times \actions$ such that 
\begin{align}
\label{eq:BRD}
\BRD(0) &= \parens{\actA_{0},\actB_{0}} 
\intertext{and, if $\BRD(\per)=\parens{\actAalt,\actBalt}$, then, for $\per$ even,}
\BRD(\per+1) &= \parens{\actAaltalt,\actBalt},
\intertext{where 
$\actAaltalt\in\argmin_{\actA\in\actions}\potential(\actA,\actBalt) \setminus \braces*{\actAalt}$, if the latter set is not empty, otherwise}
\BRD(\per+1) &= \BRD(\per); 
\intertext{for $\per$ odd,}
\BRD(\per+1) &= \parens{\actAalt,\actBaltalt},
\intertext{where 
$\actBaltalt\in\argmin_{\actB\in\actions}\potential(\actAalt,\actB)  \setminus \braces*{\actBalt}$, if the latter set is not empty, otherwise}
\BRD(\per+1) &= \BRD(\per).
\end{align}
It is easy to see that, when, for some 
positive  $\hat\per$, we have
\begin{equation}
\label{eq:BRD-PNE}
\BRD(\hat\per) = \BRD(\hat\per+1) = 
\parens{\eq\actA,\eq\actB},
\end{equation}
then $\BRD(\per)=\parens{\eq\actA,\eq\actB}$ for all $\per\ge\hat\per$ and  $\parens{\eq\actA,\eq\actB}$ is a \ac{NE} of the game.  

The fact that the game is \ac{SOP} implies that a \ac{NE} is always reached by the \ac{BRD}.
When the algorithm visits an action profile for the second time, that profile is a \ac{NE}.
Notice that a \ac{BRD} never visits a row or column more than twice (once by the row player and once by the column player), so it reaches a \ac{PNE} in at most $2\nactions$ steps. For this reason, we will often employ the notation $\BRD(2\nactions)$ to denote the action profile in which the \ac{BRD} will eventually get absorbed in.

Once a starting point $\BRD(0) = \parens{\actA_{0},\actB_{0}}$ is chosen, the \ac{BRD} will reach (deterministically) one \ac{NE}.
For each \ac{NE} $(\eq\actA,\eq\actB)$, we define its \acfi{BoA}\acused{BoA} as follows:
\begin{equation}
\label{eq:BoA} 
\BoA(\eq\actA,\eq\actB)
\coloneqq
\braces*{(\actA,\actB) \colon \text{if }\BRD(0)=(\actA,\actB), \text{ then }
 \BRD(2\nactions)=(\eq\actA,\eq\actB)}
\end{equation}

\begin{remark}
\label{re:BoA(a,b)-a}
Given the way the process $\BRD(\argdot)$ is defined,  we have that $(\actA,\actB)\in\BoA(\eq\actA,\eq\actB)$ implies $(\actAalt,\actB)\in\BoA(\eq\actA,\eq\actB)$ for all $\actAalt\in\actions$.  
This is because the very first step of the \ac{BRD} is along the row.

\end{remark}

\begin{remark}
\label{le:permutation-potential}
Notice that if the potential $\potentialalt$ is obtained from $\potential$ by permuting rows and columns, then the \acp{PNE} of\, $\potentialalt$ are just the corresponding permutations of the \acp{PNE} of\, $\potential$.
Moreover, the \aclp{BoA} of the \acp{PNE} in $\potentialalt$ are obtained by permuting the columns of the corresponding \aclp{BoA} in $\potential$.
\end{remark}

\begin{example}
\label{ex:permutation-potential}  
We now consider the realization of a \ac{SOP} and its rearrangement, obtained by permuting rows and columns.  

\begin{align*}
\begin{array}{c}
\textcolor{green}{ \begin{array} {c@{\hspace{16pt}}c@{\hspace{16pt}}c@{\hspace{16pt}}c@{\hspace{16pt}}c@{\hspace{16pt}}c@{\hspace{16pt}}c@{\hspace{16pt}}c@{\hspace{16pt}}c@{\hspace{16pt}}c@{\hspace{3pt}}}
4 & 1 & 5 & 2 & 2 & 4 & 1 & 1 & 5 & 4   
\end{array}}\\[5pt]
\begin{pmatrix}
100 & 56 & 43 & 32 & 26 & 24 & 12 & 55 & 39 & 40 \\
77 & 83 & 82 & 48 & 29 & 79 & 44 & 92 & 53 & 95 \\
97 & 3 & 28 & 23 & 57 & 30 & 91 & 17 & 41 & 89 \\
21 & 63 & 99 & 73 & 59 & \textcolor{green}{4} & 25 & 49 & 85 & 9 \\
42 & 66 & 20 & 72 & 27 & 54 & 68 & 98 & 71 & 67 \\
31 & 15 & 6 & 50 & 90 & 18 & 70 & 81 & 84 & 34 \\
96 & 16 & \textcolor{green}{5} & 38 & 78 & 65 & 47 & 36 & 8 & 60 \\
69 & 64 & 86 & 10 & \textcolor{green}{2} & 46 & 61 & 35 & 13 & 14 \\
45 & \textcolor{green}{1} & 62 & 74 & 19 & 52 & 7 & 11 & 51 & 94 \\
37 & 75 & 88 & 80 & 33 & 76 & 22 & 87 & 58 & 93 
\end{pmatrix} 
\end{array}
\quad
\begin{array}{c}
\textcolor{green}{ \begin{array} {c@{\hspace{16pt}}c@{\hspace{16pt}}c@{\hspace{16pt}}c@{\hspace{16pt}}c@{\hspace{16pt}}c@{\hspace{16pt}}c@{\hspace{16pt}}c@{\hspace{16pt}}c@{\hspace{16pt}}c@{\hspace{3pt}}}
1 & 2 & 4 & 5 & 1 & 5 & 4 & 2 & 1 & 4  
\end{array}}\\[5pt]
\begin{pmatrix}
\textcolor{green}{1} & \textcolor{brown}{19} & 52 & 62 & \textcolor{red}{7} & 51 & 94 & 74  & \textcolor{red}{11} & 45   \\
64 & \textcolor{green}{2} & 46 & 86 & 61 & \textcolor{brown}{13} & \textcolor{brown}{14} & \textcolor{red}{10} & 35 & 69  \\
\textcolor{red}{3} & 57 & 30 & 28 & 91 & 41 & 89  & 23 & \textcolor{brown}{17} & 97\\
63 & 59 & \textcolor{green}{4} & 99 & 25 & 85 & \textcolor{red}{9} & 73 & 49 & \textcolor{red}{21}  \\
\textcolor{brown}{16} & 78 & 65 & \textcolor{green}{5} & 47 & \textcolor{red}{8} & 60 & 38 & 36 & 96  \\
\textcolor{brown}{15} & 90 & \textcolor{brown}{18} & \textcolor{red}{6} & 70 & 84 & 34 & 50 & 81 & 31 \\
56 & 26 & 24 & 43 & \textcolor{red}{12} & 39 & 40 & 32 & 55 & 100 \\
66 & 27 & 54 & \textcolor{red}{20} & 68 & 71 & 67 & 72 & 98 & 42 \\
75 & 33 & 76 & 88 & 22 & 58 & 93 & 80 & 87 & 37 \\
83 & 29 & 79 & 82 & 44 & 53 & 95 & 48 & 92 & 77 
\end{pmatrix} 
\end{array}
\end{align*}
The potential on the right is obtained by permuting some rows and columns of the potential on the left.
The green numbers in the above matrices indicate the potential equilibria.
The numbers above the matrices indicate the potential of the equilibrium to which the column is attracted.
The meaning of the other colors will be explained in     \cref{re:incremental-construction-permutation} below.
\end{example}

Our goal is to study the  \acp{BoA} of the different \acp{NE}.
In particular, we will focus on their size. 
An exact analysis for fixed $\nactions$ is quite cumbersome, but we have some interesting asymptotic results. 
We remind the reader that $\equil_{i}$ is the equilibrium with the $i$-th highest potential.

\begin{theorem}
\label{th-BoA-espilon}
For all $\varepsilon\in[0,1/2)$, we have
\begin{equation}
\label{eq:BoA-epsilon}
\lim_{\nactions\to\infty}
\Expect\bracks*{\frac{1}{\nactions}\abs*{\BoA_{\nactions}(\equil_{\floor{\varepsilon\nactions}})}}
=
\funcphi(\varepsilon)\coloneqq\exp\braces*{\sqrt{1-2\varepsilon}}.
\end{equation}
\end{theorem}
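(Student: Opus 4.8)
The plan is to reduce the statement to a question about component sizes of a random functional graph and then analyze that graph through a sequential reveal of the potential in increasing order, whose fluid limit produces $\funcphi$.

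\emph{Reduction and the functional graph.} By \cref{re:BoA(a,b)-a} each basin is a union of entire columns, so $\tfrac1\nactions\abs{\BoA_\nactions(\equil)}$ equals the number of starting columns $\actB$ from which the \ac{BRD} is absorbed at $\equil$. I encode the dynamics by a functional graph $\greene$ on the $2\nactions$ lines (the $\nactions$ rows and $\nactions$ columns): every column points to the row containing the minimum of that column, and every row points to the column containing the minimum of that row, so that the \ac{BRD} exactly follows the out-edges of $\greene$. Since the potential is strictly decreasing along any directed path (each step lands on the minimum of a line already entered), $\greene$ has no cycle of length exceeding $2$; hence every weakly connected component contains a unique $2$-cycle, which is precisely a \ac{NE} (a cell minimizing both its row and its column), and all nodes of the component flow forward into it. Consequently $\tfrac1\nactions\abs{\BoA_\nactions(\equil_i)}$ equals the number of column-nodes in the component of $\equil_i$, and it suffices to compute the expectation of this count.

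\emph{Incremental reveal.} I generate a uniform $\potential\in\potentials_\nactions$ by placing the values $1,2,\dots,\nactions^2$ one at a time into uniformly chosen empty cells; the out-edge of a line is fixed the instant that line is first touched, because its minimum is the first value placed in it. Three facts drive the analysis. (i) A cell creates a new \ac{NE} exactly when its row and column are both untouched, so \acp{NE} are born in increasing order of potential and the $i$-th one born is $\equil_i$. (ii) When a fresh column is first touched, the value lands uniformly among its $\nactions$ (all empty) cells, so its partner row is uniform on $\actions$ independently of the past; symmetrically for rows. (iii) Therefore components never merge, and a fresh column attaches to a given component having $\rho$ rows with probability $\rho/\nactions$, while a fresh row attaches to a component having $\kappa$ columns with probability $\kappa/\nactions$.

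\emph{Fluid limit.} By the transpose symmetry of the model the numbers of touched rows and columns concentrate on a common value $x\nactions$, which I use as time $x\in[0,1]$. Counting births through (i)--(ii), the \ac{NE} birth rate is $(1-x)$ per freshly touched column, giving $i/\nactions\to x-x^2/2$ (consistent with $\cardNE_\nactions/\nactions\to 1/2$ at $x=1$, cf.\ \cref{th:Expect-card-NE-conc}); inverting, $\equil_{\floor{\varepsilon\nactions}}$ is born at $x_\varepsilon=1-\sqrt{1-2\varepsilon}$. Writing $\rho(x),\kappa(x)$ for the rows and columns of its component, facts (ii)--(iii) yield the linear system $\kappa'=\rho,\ \rho'=\kappa$ with $\rho(x_\varepsilon)=\kappa(x_\varepsilon)=1$, whose solution is $\rho(x)=\kappa(x)=e^{x-x_\varepsilon}$. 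Evaluating the column count at full saturation $x=1$ gives $e^{1-x_\varepsilon}=e^{\sqrt{1-2\varepsilon}}=\funcphi(\varepsilon)$.

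\emph{Main obstacle.} The hard part is making the last two steps rigorous. I expect to (a) prove concentration of the four coupled counters (touched rows, touched columns, the \ac{NE} index, and the tagged pair $(\rho,\kappa)$) around their ODE trajectories by a differential-equation-method or martingale argument, exploiting that increments are $O(1/\nactions)$ and that fact (ii) supplies a clean conditional transition law; (b) control the fluctuations of the random birth time of $\equil_{\floor{\varepsilon\nactions}}$ and show they are asymptotically negligible; and (c) upgrade convergence in probability of the final column count to convergence of its expectation, for which I would dominate $\kappa$ by a subcritical branching-type variable with uniformly bounded exponential moments to secure uniform integrability, while separately verifying that the saturation effects near $x=1$ are immaterial for each fixed $\varepsilon<1/2$ (so that the birth occurs strictly before saturation and the linearization of the attachment probabilities is valid throughout).
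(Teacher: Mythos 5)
Your reduction to the functional graph on the $2\nactions$ lines and your incremental reveal are sound, and this half of your argument coincides with the paper's: the sequential placement of values is exactly the paper's incremental construction (\cref{al:incremental-construction}), and your computation of the birth point $x_\varepsilon=1-\sqrt{1-2\varepsilon}$ via the fluid limit is the content of \cref{pr:asymptotic-epsilo-IC}, which the paper proves by the Wormald/Warnke differential equation method. Where you genuinely diverge is in the second half. The paper never tracks the tagged component as a process: it freezes the construction at the stopping time $\stime_{\varepsilon\nactions}$, uses the fact that the remaining values are placed uniformly, and computes the conditional probability that a \ac{BRD} started from a still-empty column is absorbed at $\equil_{\floor{\varepsilon\nactions}}$ by an explicit sum over path lengths (\cref{eq:P-BRD_tau}), which gives the exact conditional expectation \cref{eq:expect-basin-tau}; the limit $1+(\expo^{\zetasym}-1)$ with $\zetasym=\sqrt{1-2\varepsilon}$ is then \cref{le:approx-basin}. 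Your two-type growth ODE is the infinitesimal version of that path sum (the terms $\zetasym^{\nstepsl+1}/(\nstepsl+1)!$ are exactly the expected numbers of columns attaching through paths of length $\nstepsl$), so the two routes compute the same quantity; the paper's version has the advantage of requiring only one expectation identity conditional on $\sigmaf_{\stime_{\varepsilon\nactions}}$, with no process-level control after the stopping time.

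There is, however, a genuine gap in your plan for making the second half rigorous. Steps (a) and (c) assert that the tagged pair $(\rho_\per,\kappa_\per)$ concentrates around its ODE trajectory and that ``the final column count converges in probability'' to $\expo^{\sqrt{1-2\varepsilon}}$, to be upgraded to convergence of expectations by uniform integrability. This is false: the component of a single tagged equilibrium is of size $O(1)$, not macroscopic, so it admits no law of large numbers. Concretely, the number of columns attaching directly to the equilibrium's row is asymptotically $\Poisson(\sqrt{1-2\varepsilon})$ (roughly $\sqrt{1-2\varepsilon}\,\nactions$ fresh columns, each attaching with probability $1/\nactions$), and with probability bounded away from zero the component never grows beyond its initial row and column; hence $\abs*{\BoA_{\nactions}(\equil_{\floor{\varepsilon\nactions}})}/\nactions$ converges in distribution to a nondegenerate total-progeny-type random variable whose \emph{mean} is $\funcphi(\varepsilon)$ — your own description of $\kappa$ as ``a subcritical branching-type variable'' already says as much. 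The fix within your framework is to note that the conditional increments are \emph{linear} in the tagged pair, e.g. $\Expect\bracks*{\kappa_{\per+1}-\kappa_{\per}\mid\sigmaf_{\per}}=(\nactions-\necolumns_{\per})\rho_{\per}/(\nactions^{2}-\per)$, so that taking expectations yields a closed recursion for $(\Expect\bracks*{\rho_{\per}},\Expect\bracks*{\kappa_{\per}})$ which follows the ODE; the differential-equation-method concentration is then needed only for the macroscopic counters $\nerows_{\per},\necolumns_{\per}$, and your branching domination serves to control the error on the rare bad event and the saturation tail (just as the paper controls $\nactions\Prob\parens*{\evented^{\rm c}}\to 0$). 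With that correction your route goes through, but as stated steps (a) and (c) would fail.
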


The following corollary shows how the ranking $\rank$ of the equilibria reached by the \ac{BRD} behaves asymptotically. 

\begin{corollary}
\label{co:BoA-epsilon}  
Let $\BRD_{\nactions}(0)$ be chosen uniformly at random on $\nactions\times\nactions$.
For all $\varepsilon\in[0,1/2)$, we have
\begin{equation}
\label{eq:BoA-epsilon-DF}  
\lim_{\nactions\to\infty}
\Prob\parens*{\rank(\BRD(2\nactions)) \le \varepsilon\nactions} 
= \funcPhi(\varepsilon) \coloneqq
(1-\sqrt{1-2\varepsilon})\exp\braces*{\sqrt{1-2\varepsilon}}.
\end{equation}
Moreover,
\begin{equation}
\label{eq:BoA-epsilon-Expect}  
\lim_{\nactions\to\infty}
\Expect\bracks*{\frac{\rank(\BRD(2\nactions))}{\nactions}} 
=\expo-\frac{5}{2}
\approx 0.21.
\end{equation}
\end{corollary}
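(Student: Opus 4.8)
The plan is to derive \cref{co:BoA-epsilon} from \cref{th-BoA-espilon}: since a uniformly random starting profile falls into a given basin with probability proportional to its size, the statement is essentially an averaging of \cref{th-BoA-espilon} over the ranking. First I would condition on the realized potential. Because $\BRD(0)$ is drawn uniformly on $\nactions\times\nactions$ independently of the game and each starting profile flows deterministically to a single equilibrium, the conditional probability of being absorbed at an equilibrium $\equil$ equals $\abs{\BoA_{\nactions}(\equil)}/\nactions^{2}$; the column structure noted in \cref{re:BoA(a,b)-a} is what makes $\abs{\BoA_{\nactions}(\equil)}/\nactions$ a count of columns and explains the normalization in \cref{th-BoA-espilon}. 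As the equilibria are indexed by increasing potential we have $\rank(\equil_{i})=i$, so summing over ranks at most $\varepsilon\nactions$ and taking expectations over the game yields
\begin{equation*}
\Prob\parens*{\rank(\BRD(2\nactions))\le\varepsilon\nactions}
=\frac{1}{\nactions}\sum_{i=1}^{\floor{\varepsilon\nactions}}\Expect\bracks*{\frac{1}{\nactions}\abs*{\BoA_{\nactions}(\equil_{i})}},
\end{equation*}
where a summand is interpreted as $0$ when $\equil_{i}$ does not exist, an event whose contribution is negligible for $\varepsilon<1/2$ by the concentration in \cref{th:Expect-card-NE-conc}.

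The right-hand side is a Riemann sum, with mesh $1/\nactions$, of the function $s\mapsto\funcphi(s)=\exp\braces*{\sqrt{1-2s}}$ on $[0,\varepsilon]$: by \cref{th-BoA-espilon} the summand with $i=\floor{s\nactions}$ converges to $\funcphi(s)$. Granting the interchange of limit and summation, I would obtain
\begin{equation*}
\lim_{\nactions\to\infty}\Prob\parens*{\rank(\BRD(2\nactions))\le\varepsilon\nactions}
=\int_{0}^{\varepsilon}\exp\braces*{\sqrt{1-2s}}\diff s.
\end{equation*}
The substitution $u=\sqrt{1-2s}$ turns this into $\int_{\sqrt{1-2\varepsilon}}^{1}u\,\expo^{u}\diff u=\bracks*{(u-1)\expo^{u}}_{\sqrt{1-2\varepsilon}}^{1}=(1-\sqrt{1-2\varepsilon})\exp\braces*{\sqrt{1-2\varepsilon}}$, which is exactly $\funcPhi(\varepsilon)$, proving \cref{eq:BoA-epsilon-DF}.

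For \cref{eq:BoA-epsilon-Expect} I would use that $\rank(\BRD(2\nactions))/\nactions$ lies in $[0,1]$ deterministically, since the number of equilibria never exceeds $\nactions$. The limit in \cref{eq:BoA-epsilon-DF} identifies the weak limit as the law with continuous distribution function $\funcPhi$ supported on $[0,1/2]$, and evaluating this limit on the bounded continuous function $x\mapsto x$ gives $\Expect[\rank(\BRD(2\nactions))/\nactions]\to\int_{0}^{1/2}x\,\funcPhi'(x)\diff x$, equivalently $\int_{0}^{1/2}(1-\funcPhi(x))\diff x$ by the tail formula. Since $\funcPhi'(x)=\funcphi(x)$, the same substitution $u=\sqrt{1-2x}$ reduces both forms to linear combinations of the elementary integrals $\int_{0}^{1}u^{k}\expo^{u}\diff u$ with $k\in\{1,2,3\}$; carrying out the arithmetic yields $\expo-5/2\approx0.21$.

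\textbf{Main obstacle.} The only genuinely delicate step is the interchange of limit and sum in the second paragraph, since \cref{th-BoA-espilon} supplies merely \emph{pointwise} convergence of the summands. I would remove this obstacle by establishing a uniform control on $\Expect[\abs{\BoA_{\nactions}(\equil_{i})}/\nactions]$ across $i\le\varepsilon\nactions$ — for instance an upper bound by the $i=1$ value, which tends to $\funcphi(0)=\expo$, or an approximate monotonicity in $i$ reflecting that lower-potential equilibria have larger basins. Either property upgrades the pointwise limit to convergence of the Riemann sum through a dominated-convergence or Pólya-type argument. Everything downstream — the two substitutions and the evaluation of $\int_{0}^{1}u^{k}\expo^{u}\diff u$ — is routine, so this uniformity is where the actual effort should be directed.
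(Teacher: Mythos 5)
Your proposal is correct and takes essentially the same route as the paper's (much terser) proof: condition on the realized potential to write $F_{\nactions}(\varepsilon)\coloneqq\Prob\parens*{\rank(\BRD(2\nactions))\le\varepsilon\nactions}=\Expect\bracks*{\sum_{j=1}^{\floor{\varepsilon\nactions}}\abs*{\BoA(\equil_{j})}/\nactions^{2}}$, view this as a Riemann sum converging via \cref{th-BoA-espilon} to $\int_{0}^{\varepsilon}\funcphi(s)\diff s=\funcPhi(\varepsilon)$, and deduce \cref{eq:BoA-epsilon-Expect} from the distributional limit together with the deterministic bound $\rank(\BRD(2\nactions))/\nactions\le1$ (your integral evaluations are exact). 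The limit\textendash sum interchange you single out as the main obstacle is indeed the only nontrivial point, and the paper skips it silently; note that it can be closed without your unproven monotonicity/domination hypotheses by a Scheff\'e-type argument: Fatou's lemma applied to each increment gives $\liminf_{\nactions}\parens*{F_{\nactions}(\varepsilon')-F_{\nactions}(\varepsilon)}\ge\funcPhi(\varepsilon')-\funcPhi(\varepsilon)$ for all $0\le\varepsilon<\varepsilon'<1/2$, and combining this with $F_{\nactions}\le1$ and $\lim_{\varepsilon'\uparrow1/2}\funcPhi(\varepsilon')=1$ yields $\limsup_{\nactions}F_{\nactions}(\varepsilon)\le\funcPhi(\varepsilon)$, matching the Fatou lower bound.
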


%

The value $\rank(\BRD(2\nactions))$ represents the relative (with respect to $\nactions$) ranking of the equilibrium reached by the \ac{BRD}.
This random variable is supported on the interval $[0,1/2]$.
\cref{fi:cdf-pdf-BRD} shows the plot of its density function $\funcphi$ and its distribution  function $\funcPhi$.

\begin{figure}[H]
\centering
\includegraphics[width=7cm]{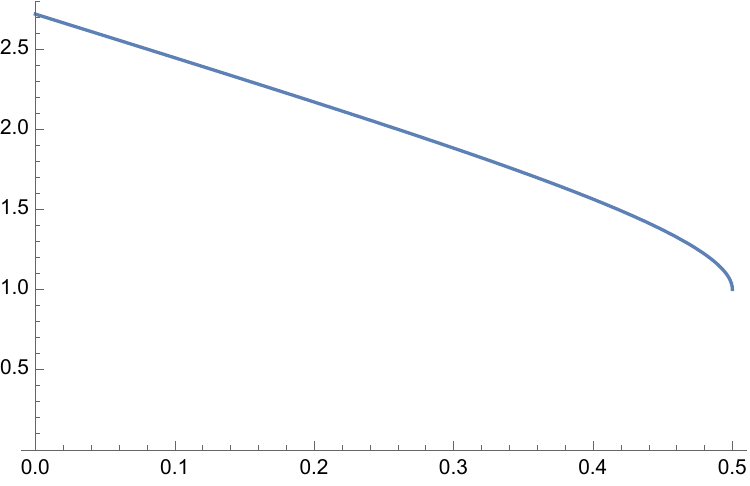}
\qquad
\includegraphics[width=7cm]{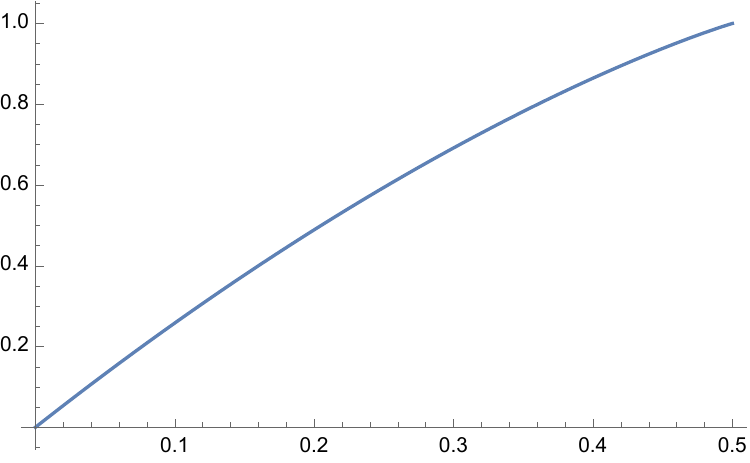}    
\caption{Plot of the functions 
$\funcphi(\argdot)$ (left) and 
$\funcPhi(\argdot)$ (right).}
    \label{fi:cdf-pdf-BRD}
\end{figure}


The next theorem shows that the potential of the worst \ac{PNE}  cannot be much larger than $\nactions\log\nactions$. 

\begin{theorem}
\label{th:BoA-rank-potential} 
For all $\delta>0$, we have
\begin{equation}
\label{eq:BoA-rank-potential}  
\lim_{\nactions\to\infty}
\Prob\parens*{\frac{\potential(\equil_{\cardNE_{\nactions}})}{\nactions \log \nactions} <1+ \delta} 
= 1.
\end{equation}
\end{theorem}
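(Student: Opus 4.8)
The plan is to run a first-moment (union-bound) argument. Since $\potential(\equil_{\cardNE_{\nactions}})$ is by definition the largest potential value attained at any \ac{PNE}, the event in \eqref{eq:BoA-rank-potential} fails exactly when some \ac{PNE} has potential at least $m\coloneqq\lceil(1+\delta)\nactions\log\nactions\rceil$. Letting $N_{\ge m}$ denote the number of such \acp{PNE}, Markov's inequality gives $\Prob(N_{\ge m}\ge1)\le\Expect\bracks*{N_{\ge m}}$, so it suffices to show $\Expect\bracks*{N_{\ge m}}\to0$.

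First I would compute the per-cell probability. Under the sign convention of \eqref{eq:ordinal-potential}, a profile is a \ac{PNE} iff its potential is the smallest among the $2\nactions-1$ cells in its own row and column (its \emph{cross}). Realizing the random potential through $\nactions^{2}$ \iid $\Unif[0,1]$ variables and ranking them, I condition on the value $u$ of a fixed cell, say $(1,1)$. Given $u$, this cell is a \ac{PNE} iff the remaining $2\nactions-2$ cross cells all exceed $u$, which has probability $(1-u)^{2\nactions-2}$; and, on that event, its rank is at least $m$ iff at least $m-1$ of the $(\nactions-1)^{2}$ off-cross cells lie below $u$. As cross and off-cross cells are conditionally independent given $u$, symmetry over the $\nactions^{2}$ cells yields
\begin{equation*}
\Expect\bracks*{N_{\ge m}}
=\nactions^{2}\int_{0}^{1}(1-u)^{2\nactions-2}\,\Prob\bigl(\Binomial((\nactions-1)^{2},u)\ge m-1\bigr)\diff u.
\end{equation*}

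Next I would evaluate and bound this integral. Expanding the binomial tail and integrating term by term against the Beta kernel $\int_{0}^{1}u^{j}(1-u)^{\ell}\diff u=j!\,\ell!/(j+\ell+1)!$, and using $(\nactions-1)^{2}+2\nactions-1=\nactions^{2}$, every summand collapses to a ratio of factorials:
\begin{equation*}
\Expect\bracks*{N_{\ge m}}
=\nactions^{2}\sum_{r\ge m}\frac{((\nactions-1)^{2})!\,(\nactions^{2}-r)!}{(\nactions^{2}-2\nactions+2-r)!\,(\nactions^{2})!},
\end{equation*}
the $r$-th term being exactly the probability that $(1,1)$ is a \ac{PNE} of rank $r$. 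For $r$ in the relevant range each term is at most $C\nactions^{-2}\exp(-2r/\nactions)$ for an absolute constant $C$: the factor $\nactions^{-2}$ is the probability that $(1,1)$ carries the global minimum, and the exponential comes from $(1-r/\nactions^{2})^{2\nactions}\approx\exp(-2r/\nactions)$. Summing the resulting geometric series from $r=m$ costs a further factor $\bigoh(\nactions)$, so $\Expect\bracks*{N_{\ge m}}=\bigoh(\nactions\,e^{-2m/\nactions})$; with $m\ge(1+\delta)\nactions\log\nactions$ this is $\bigoh(\nactions^{-1-2\delta})\to0$, proving the theorem.

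The whole estimate is governed by one balance: the expected number of \acp{PNE} of potential at least $m$ is of order $\nactions\,e^{-2m/\nactions}$, which crosses $1$ near $m\approx\tfrac12\nactions\log\nactions$; hence any threshold $c\,\nactions\log\nactions$ with $c>1/2$ already kills the first moment, and the stated constant $1+\delta$ leaves ample room, so no second-moment input is required. The only genuinely technical point is to make the Laplace-type bound on the factorial ratio rigorous uniformly over the whole summation range $m\le r\le(\nactions-1)^{2}+1$, namely to replace the ratio by the clean envelope $C\nactions^{-2}e^{-2r/\nactions}$ with an honest constant and to check that the tail of the geometric sum contributes only the advertised $\bigoh(\nactions)$.
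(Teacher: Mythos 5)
Your proof is correct, but it follows a genuinely different route from the paper's. The paper proves this theorem as a quick corollary of its incremental-construction machinery: since a new equilibrium (green entry) can only be created at a step in which a brand-new row is opened, the potential of the worst equilibrium is bounded by the coupon-collector time $\coupont_{\nactions}$ at which all $\nactions$ rows become non-empty; the expectation $\nactions\harmonic_{\nactions}\sim\nactions\log\nactions$ together with a variance bound and Chebyshev's inequality then gives \eqref{eq:BoA-rank-potential}. You instead run a self-contained first-moment computation directly on the uniform ranking model: exchangeability plus conditioning on the uniform value of a cell gives the exact formula
\begin{equation*}
\Expect\bracks*{N_{\ge m}}
=\nactions^{2}\sum_{r\ge m}\frac{((\nactions-1)^{2})!\,(\nactions^{2}-r)!}{(\nactions^{2}-2\nactions+2-r)!\,(\nactions^{2})!}\,,
\end{equation*}
and your envelope $C\nactions^{-2}\expo^{-2r/\nactions}$ is easily made rigorous (each of the $2\nactions-2$ ratios $(\nactions^{2}-r-j)/(\nactions^{2}-j)$ is at most $\expo^{-r/\nactions^{2}}$, and the residual factor $\expo^{2r/\nactions^{2}}\le\expo^{2}$ is absorbed into $C$), so the argument closes as you claim. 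The trade-off: the paper's proof is two lines given machinery it has already built and needs elsewhere, whereas yours requires no part of the incremental construction, yields an explicit polynomial failure rate $\bigoh(\nactions^{-1-2\delta})$, and in fact proves the stronger statement that any threshold $c\,\nactions\log\nactions$ with $c>1/2$ works — showing the constant $1$ in the theorem is not sharp, since the coupon-collector bound (time to fill \emph{all} rows) is lossy by a factor of $2$ relative to the last time a new row and a new column open simultaneously.
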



\section{The incremental construction}
\label{se-incremental-construction}

\cref{th-BoA-espilon} will be proved using what we call the \emph{incremental construction} of the game.
This construction provides the potential of a random \ac{SOP} game that does not have a uniform distribution, but has the same set of equilibrium potentials that a uniformly distributed potential has. 

For a fixed integer $\nactions$, we will construct a random potential function  $\potential\in\potentials_{\nactions}$ by adding entries sequentially according to the  algorithm described below.

Set $\necolumns_0=\nerows_0=0$ and, for $\per\in\bracks*{\nactions^{2}}$, 

\begin{enumerate}[label=(\alph*), ref=(\alph*)]
\item 
we call $\nerows_{\per}$ the number of non-empty rows, $\necolumns_{\per}$ the number of non-empty columns, and $\greene_{\per}$ the number of green entries after adding the first $\per$ entries of $\potential$;

\item 
we call $\ematrix_{\per}$ the sub-matrix of $\potential$ composed of rows $\bracks*{\nerows_{\per}}$ and columns $\bracks*{\necolumns_{\per}}$;

\item 
we call $\succrow_{\per}$ a Bernoulli random variable such that 
\begin{equation}
\label{eq:success-row-prob} 
\Prob(\succrow_{\per}=1)=
\succprobrow_{\per} 
\coloneqq
\frac{(\nactions-\nerows_{\per-1})\nactions}{\nactions^{2}-\per-1};
\end{equation}

\item 
we call $\succcol_{\per}$ a Bernoulli random variable such that 
\begin{equation}
\label{eq:success-col-prob} 
\Prob(\succcol_{\per}=1)=
\succprobcol_{\per} 
\coloneqq
\frac{(\nactions-\necolumns_{\per-1})}{\nactions}.
\end{equation}
\end{enumerate}

The incremental construction will be described by \cref{al:incremental-construction}.

\begin{algorithm} [H]
\caption{Incremental construction}
\label{al:incremental-construction}
\begin{enumerate}
\item 
Set $\potential(1,1)=1$.

\item 
Color the entry $(1,1)$ green.

\item 
For $\per\in\braces*{1,\dots,\nactions^{2}}$, the $\per+1$-th entry is added as follows:

\begin{enumerate}[label=(\alph*), ref=(\alph*)]
\item 
If $\succrow_{\per+1}=1$, then set $\nerows_{\per+1}=\nerows_{\per}+1$.

\begin{enumerate}[label=(\alph{enumii}.\roman*), ref=(\alph{enumii}.\roman*)]
\item 
If $\succcol_{\per+1}=1$, then set $\necolumns_{\per+1}=\necolumns_{\per}+1$, 
$\potential(\nerows_{\per+1},\necolumns_{\per+1})=\per+1$,
and color $(\nerows_{\per+1},\necolumns_{\per+1})$ green.

\item 
If $\succcol_{\per+1}=0$, then set $\necolumns_{\per+1}=\necolumns_{\per}$, sample $\sampleZ_{\per+1}$ uniformly at random in $\bracks*{\necolumns_{\per}}$, and set $\potential(\nerows_{\per+1},\sampleZ_{\per+1})=\per+1$.
\end{enumerate}

\item 
\label{it:R-t+1=0}
If $\succrow_{\per+1}=0$, then set $\nerows_{\per+1}=\nerows_{\per}$
and draw one entry uniformly at random among the empty entries in the rows $\braces*{1,\dots,\nerows_{\per}}$.
Call this entry $(\sampleX_{\per+1},\sampleY_{\per+1})\in[\nerows_{\per}]\times[\nactions]$.
\begin{enumerate}[label=(\alph{enumii}.\roman*), ref=(\alph{enumii}.\roman*)]
\item 
If $(\sampleX_{\per+1},\sampleY_{\per+1})\not\in\ematrix_{\per}$, then set $\necolumns_{\per+1}=\necolumns_{\per}+1$ and  $\potential(\sampleX_{\per+1},\necolumns_{\per+1})=\per+1$.

\item 
If $(\sampleX_{\per+1},\sampleY_{\per+1})\in\ematrix_{\per}$, then  set $\necolumns_{\per+1}=\necolumns_{\per}$ and $\potential(\sampleX_{\per+1},\sampleY_{\per+1})=\per+1$.

\end{enumerate}

\end{enumerate}

\item 
The output of the algorithm will be called $\potential$.

\end{enumerate}  

\end{algorithm}

In the incremental construction  call $\potential^{\per}$ the set of all entries added at times $\peralt \le \per$ and 
$\sigmaf_{\per}\coloneqq \sigma(\potential^{\per})$.
So $\braces*{\sigmaf_{\per}}_{\per\in\braces{0,\dots,\nactions^{2}}}$ is the natural filtration.

\begin{remark}
\label{re:incremental-construction-permutation}  
The potential on the right in \cref{ex:permutation-potential} is a possible outcome of an incremental construction. In that example, the incremental construction ends right after having added the entry $21$. The numbers in green represent Nash equilibria; those in red represent steps of the algorithm in which a new row (or a new column, but not both) is created; the numbers in brown represent the steps on the algorithm in which the number of row and columns of $\ematrix$ do not change.
\end{remark}


\section{Asymptotic analysis of the incremental construction}
\label{se:asymptotics-incremental-construction}

\subsection{$\varepsilon$-stopped incremental construction}
\label{suse:epsilon-stopped-IC}

\begin{definition}
\label{de:epsilon-IC}    
Fix $\varepsilon\in(0,1/2)$. An \emph{$\varepsilon$-stopped incremental construction} of a potential $\potential\in\potentials_{\nactions}$ is obtained as follows: perform the incremental construction of \cref{al:incremental-construction} up to the stopping time $\stime_{\varepsilon\nactions}$ defined as
\begin{equation}
\label{eq:epsilon-time}
\stime_{\varepsilon\nactions} \coloneqq \inf \braces*{\per \ge 0 \colon \greene_{\per} = \floor{\varepsilon\nactions}}.
\end{equation}
After $\stime_{\varepsilon\nactions}$ continue the construction by placing the remaining integers $\stime_{\varepsilon\nactions}+1, \dots, \nactions^{2}$ uniformly at random on the remaining empty entries.
\end{definition}

Notice that the set of best $\floor{\varepsilon\nactions}$ equilibria and the value of their potential are measurable with respect to $\sigmaf_{\stime_{\varepsilon\nactions}}$.

\begin{proposition}
\label{pr:asymptotic-epsilo-IC} 
Fix $\varepsilon\in(0,1/2)$. 
\begin{enumerate}[label={\rm(\alph*)}, ref=(\alph*)]
\item 
\label{it:pr:asymptotic-epsilo-IC-a}
The stopping time $\stime_{\varepsilon\nactions}$ satisfy
\begin{equation}
\label{eq:stopping-time-concentrates}
\frac{\stime_{\varepsilon\nactions}}{\nactions} 
\xrightarrow[\nactions\to\infty]{\Prob}
\frac{1}{2}\log\parens*{\frac{1}{1-2\varepsilon}}.
\end{equation}

\item 
\label{it:pr:asymptotic-epsilo-IC-b}
The quantities $\nerows_{\stime_{\varepsilon\nactions}}$ and $\necolumns_{\stime_{\varepsilon\nactions}}$  satisfy
\begin{equation}
\label{eq:random-matrices-concentrates}
\parens*{\frac{\nerows_{\stime_{\varepsilon\nactions}}}{\nactions}, \frac{\necolumns_{\stime_{\varepsilon\nactions}}}{\nactions}
}
\xrightarrow[\nactions\to\infty]{\Prob}
\parens*{1 - \sqrt{1-2\varepsilon},1 - \sqrt{1-2\varepsilon}
}.
\end{equation}
\end{enumerate}
\end{proposition}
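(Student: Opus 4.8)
The plan is to regard the triple $(\nerows_\per,\necolumns_\per,\greene_\per)$ as a time-inhomogeneous Markov chain and to show, by a fluid-limit argument, that after rescaling both the discrete time $\per$ and the three coordinates by $\nactions$ they concentrate \emph{uniformly} around the solution of an explicit system of ordinary differential equations. Writing $x=\per/\nactions$ for the rescaled time and reading off the one-step conditional drifts, a new row is created at step $\per$ with probability $\succprobrow_\per=(\nactions-\nerows_{\per-1})\nactions/(\nactions^2-\per-1)=(1-\nerows_{\per-1}/\nactions)(1+\smalloh(1))$ when $\per=\bigoh(\nactions)$; a new column is created with probability $\succprobcol_\per=1-\necolumns_{\per-1}/\nactions$ in case (a) and, as a short computation shows, with the same probability up to a relative $\bigoh(1/\nactions)$ factor in case (b); and a green entry is created exactly when $\succrow_\per=\succcol_\per=1$, an event of probability $\succprobrow_\per\succprobcol_\per$. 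The limiting trajectories $(r,c,g)$ therefore solve
\begin{equation*}
r'=1-r,\qquad c'=1-c,\qquad g'=(1-r)(1-c),\qquad r(0)=c(0)=g(0)=0,
\end{equation*}
whence $r(x)=c(x)=1-\expo^{-x}$ and $g(x)=\tfrac12(1-\expo^{-2x})$; note that $g(\infty)=1/2$ is consistent with \cref{th:Expect-card-NE-conc}.

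I would first establish the row estimate, which is cleanest because its drift is exact. Setting $E_\per=\nactions-\nerows_\per$ one has $\Expect[E_\per\mid\sigmaf_{\per-1}]=E_{\per-1}\bigl(1-\nactions/(\nactions^2-\per-1)\bigr)$, so $M_\per\coloneqq E_\per/\prod_{\peralt\le\per}\bigl(1-\nactions/(\nactions^2-\peralt-1)\bigr)$ is a martingale whose increments are bounded by a constant on $\{\per\le\constT\nactions\}$ for any fixed $\constT$. Azuma's inequality combined with Doob's maximal inequality then gives $\sup_{\per\le\constT\nactions}\abs{\nerows_\per-\nactions(1-\expo^{-\per/\nactions})}=\smalloh(\nactions)$ with high probability. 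Conditioning on this event, the column drift equals $\succprobcol_\per(1+\bigoh(1/\nactions))$, and a Doob decomposition together with a Gronwall-type stability estimate — exploiting that the field $c\mapsto 1-c$ is contracting, so the accumulated $\bigoh(1/\nactions)$ per-step perturbations over $\bigoh(\nactions)$ steps remain $\smalloh(\nactions)$ — yields the same uniform concentration of $\necolumns_\per$ around $\nactions(1-\expo^{-\per/\nactions})$. The analogous martingale argument for $\greene_\per$, whose drift $\succprobrow_\per\succprobcol_\per$ is a product of two already controlled quantities, gives $\sup_{\per\le\constT\nactions}\abs{\greene_\per-\tfrac{\nactions}{2}(1-\expo^{-2\per/\nactions})}=\smalloh(\nactions)$ with high probability.

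With uniform concentration in hand, part~\ref{it:pr:asymptotic-epsilo-IC-a} follows by monotone inversion. Since $\varepsilon<1/2$, the level $\varepsilon$ lies strictly below $g(\infty)=1/2$ and is attained at the unique point $x^\ast=\tfrac12\log(1/(1-2\varepsilon))$, where $g$ is strictly increasing with $g'(x^\ast)=1-2\varepsilon>0$; choosing $\constT>x^\ast$ ensures the good event covers the relevant times. Because $\per\mapsto\greene_\per$ is non-decreasing, on the good event $\greene_{\floor{(x^\ast-\eta)\nactions}}<\floor{\varepsilon\nactions}\le\greene_{\floor{(x^\ast+\eta)\nactions}}$ for every fixed $\eta>0$ and all large $\nactions$, which forces $\stime_{\varepsilon\nactions}/\nactions\in(x^\ast-\eta,x^\ast+\eta]$ and proves convergence in probability to $x^\ast$. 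For part~\ref{it:pr:asymptotic-epsilo-IC-b} I would combine the uniform concentration of $\nerows_\per$ and $\necolumns_\per$ with the continuity of $r$ and $c$ and the concentration of $\stime_{\varepsilon\nactions}/\nactions$ just obtained: on the intersection of the good events, $\nerows_{\stime_{\varepsilon\nactions}}/\nactions=r(\stime_{\varepsilon\nactions}/\nactions)+\smalloh(1)\to r(x^\ast)=1-\expo^{-x^\ast}=1-\sqrt{1-2\varepsilon}$, and identically for $\necolumns_{\stime_{\varepsilon\nactions}}$.

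The main obstacle is the uniform-in-$\per$ fluid-limit step, and within it the column coordinate, which — unlike the row count — carries no exact martingale, because a fresh column can also be opened in case (b) through the uniform draw among the empty cells. The crux is the careful but elementary verification that this case-(b) probability matches $\succprobcol_\per$ up to a relative $\bigoh(1/\nactions)$ factor: all $\per-1$ filled cells lie in the first $\nerows_{\per-1}$ rows, so the draw ranges over $\nerows_{\per-1}\nactions-(\per-1)$ empty cells, of which $\nerows_{\per-1}(\nactions-\necolumns_{\per-1})$ open a new column, and on the good row event the ratio of these equals $\succprobcol_\per$ up to a factor $1+\bigoh(1/\nactions)$. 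The accompanying stability argument ensuring these small per-step errors do not accumulate to order $\nactions$ is the delicate point; the remaining ingredients — martingale concentration, monotone inversion, and continuity — are then routine.
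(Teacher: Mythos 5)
Your proposal is correct, and its overall strategy coincides with the paper's: approximate the space--time rescaled triple $(\nerows_\per,\necolumns_\per,\greene_\per)$ by the solution of the Cauchy problem $r'=1-r$, $c'=1-c$, $g'=(1-r)(1-c)$ (exactly the paper's \cref{eq:Cauchy-Warnke}), then deduce \ref{it:pr:asymptotic-epsilo-IC-a} by monotone inversion of $\greene$ at level $\floor*{\varepsilon\nactions}$, and \ref{it:pr:asymptotic-epsilo-IC-b} by evaluating the row and column limits at the concentrated hitting time. Where you genuinely diverge is in how the uniform concentration is produced. The paper never works with the true drift of $\necolumns_\per$: it introduces auxiliary processes $\tnecolumns_\per$ and $\tgreene_\per$ that simply ignore the problematic case-(b) steps in which the uniform draw lands on an occupied cell, bounds $0\le\necolumns_\per-\tnecolumns_\per\le\nsteps_\per$ by the number of such steps, shows $\nsteps_\per$ is negligible via a Chernoff bound (\cref{le:lim-P-Dt}, \cref{co:Ct-tilde-Ct}), verifies the drift hypotheses only for the simplified triple (\cref{le:three-functions}), and then invokes Warnke's quantitative version of Wormald's differential-equation method as a black box (\cref{le:Warnke}). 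You instead compute the exact case-(b) column-opening probability and hand-roll the concentration with multiplicatively compensated martingales, Azuma, Doob, and a Gronwall stability estimate. Both routes work: the paper's buys explicit exponential error bounds with essentially no bespoke martingale analysis, bounds that are reused later to get $\nactions\Prob\parens*{\evented^{\rm c}}\to 0$ in the proof of \cref{th-BoA-espilon}; yours is self-contained and avoids both the auxiliary coupling and the external citation, at the cost of having to carry out the stability analysis explicitly.

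One imprecision is worth flagging. Your claim that the case-(b) probability equals $\succprobcol_{\per+1}$ up to a \emph{relative} factor $1+\bigoh(1/\nactions)$ on the good row event requires $\nerows_\per\gtrsim\per$, which your uniform $\smalloh(\nactions)$ row estimate does not provide at times $\per=\smalloh(\nactions)$. What saves the argument is that the \emph{absolute} drift error needs no lower bound on $\nerows_\per$ at all: conditionally on $\sigmaf_\per$,
\begin{equation*}
\parens*{1-\succprobrow_{\per+1}}\parens*{\frac{\nerows_\per(\nactions-\necolumns_\per)}{\nerows_\per\nactions-\per}-\frac{\nactions-\necolumns_\per}{\nactions}}
=\frac{(\nactions-\necolumns_\per)\,\per}{(\nactions^{2}-\per)\,\nactions}
\le\frac{2\per}{\nactions^{2}}
\end{equation*}
(up to the paper's off-by-one conventions in \cref{eq:success-row-prob}), since the factor $\nerows_\per\nactions-\per$ cancels exactly --- the same cancellation the paper exploits in \cref{eq:P-nsteps}. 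Summing over $\per\le\constT\nactions$ gives an accumulated error of order $\bigoh(1)=\smalloh(\nactions)$ unconditionally, so your Gronwall step goes through; you should state the bound in this absolute form rather than as a relative error.
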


The proof of \cref{pr:asymptotic-epsilo-IC} requires the introduction of a new process, $(\tnecolumns_{\per})_{\per\ge 0}$, defined below.

In \cref{al:incremental-construction}\ref{it:R-t+1=0} let $\sampleY_{\per+1}$ be sampled as follows: draw a cell at random in the set of rows $\braces*{1,\dots,\nerows_{\per}}$. 
If the cell is not empty, draw another cell, and keep drawing until an empty cell is selected.
Call $\noresample_{\per+1}$ a Bernoulli random variable that takes value $1$ if no resamples are needed. 
We have
\begin{equation}
\label{eq:no-resamples} 
\Prob(\noresample_{\per+1}=1)=
\noresampleprob_{\per+1} = 
1- \frac{\per}{\nerows_{\per}\nactions}.
\end{equation}

The process $\tnecolumns_{\per}$ is constructed as follows:
\begin{enumerate}[label=(\alph*), ref=(\alph*)]
\item 
\label{it:init}
Set $\tnecolumns_{0}=0$.

\item 
\label{it:srow1}
If $\succrow_{\per+1}=1$, then set $\tnecolumns_{\per+1}-\tnecolumns_{\per}=\necolumns_{\per+1}-\necolumns_{\per}$.

\item 
\label{it:srow0-noresample1}
If $\succrow_{\per+1}=0$ and $\noresample_{\per+1}=1$, then 
set $\tnecolumns_{\per+1}-\tnecolumns_{\per}=\necolumns_{\per+1}-\necolumns_{\per}$.

\item 
\label{it:srow0-noresample0}
If $\succrow_{\per+1}=0$ and $\noresample_{\per+1}=0$, then 
set $\tnecolumns_{\per+1}=\necolumns_{\per}$.
\end{enumerate}

Call $\nsteps_{\per}$ the number of steps where \ref{it:srow0-noresample0} occurs up to time $\per$.
It follows that, a.s.,
\begin{equation}
\label{eq:bound-steps} 
0 \le \necolumns_{\per} - \tnecolumns_{\per} \le \nsteps_{\per}.
\end{equation}
Moreover,
\begin{equation}
\label{eq:P-nsteps}  
\Prob\parens*{\nsteps_{\per+1} = \nsteps_{\per}+1} = 
\Prob\parens*{\succrow_{\per+1}=0,  \noresample_{\per+1}=0} =
\frac{\nerows_{\per}\nactions-\per}{\nactions^{2}-\per}
\frac{\per}{\nerows_{\per}\nactions}
\le \frac{\per}{\nactions^{2}},
\end{equation}
where the inequality stems from the fact that $\nerows_{\per}\le\nactions$, which implies $(\nerows_{\per}\nactions-\per)/(\nactions^{2}-\per) \le \nerows_{\per}\nactions/\nactions^{2}$.

The next two results provide a control on the difference between the processes $\necolumns_{\per}$ and $\tnecolumns_{\per}$.
\begin{lemma}
\label{le:lim-P-Dt}    
Fix $\paralpha\in(0,1)$ and choose $\per=\lfloor\nactions^{1+\paralpha}\rfloor$.
Then
\begin{equation}
\label{eq:lim-P-Dt}
\lim_{\nactions\to\infty}\Prob\parens*{\nsteps_{\per} < 6\nactions^{2\paralpha}} = 1.
\end{equation}
\end{lemma}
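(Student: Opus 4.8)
The plan is to bound the complementary probability $\Prob\parens*{\nsteps_{\per} \ge 6\nactions^{2\paralpha}}$ and show that it vanishes, using the conditional increment estimate \eqref{eq:P-nsteps} together with a Chernoff-type argument. First I would write $\nsteps_{\per}$ as a sum of indicators of the steps at which case \ref{it:srow0-noresample0} occurs,
\begin{equation*}
\nsteps_{\per} = \sum_{\peralt=0}^{\per-1} \ind\braces*{\nsteps_{\peralt+1} = \nsteps_{\peralt}+1}.
\end{equation*}
Reading \eqref{eq:P-nsteps} as a conditional statement given $\sigmaf_{\peralt}$ (the random quantity $\nerows_{\peralt}$ appears on its right-hand side), it provides the bound $\Prob\parens*{\nsteps_{\peralt+1} = \nsteps_\peralt + 1 \mid \sigmaf_\peralt} \le \peralt/\nactions^2 \le \per/\nactions^2$, uniformly over $\peralt < \per$.

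The main obstacle is that these indicators are \emph{not} independent: the increment probability at each step depends on the random number of occupied rows $\nerows_\peralt$, so a Chernoff bound for independent Bernoulli variables cannot be invoked directly. To circumvent this I would control the moment generating function by iterated conditioning. For any $\lambda > 0$, since each indicator is Bernoulli and $1 + x \le e^{x}$,
\begin{equation*}
\Expect\bracks*{e^{\lambda \ind\{\nsteps_{\peralt+1}=\nsteps_\peralt+1\}} \mid \sigmaf_\peralt}
= 1 + (e^{\lambda}-1)\,\Prob\parens*{\nsteps_{\peralt+1}=\nsteps_\peralt+1 \mid \sigmaf_\peralt}
\le \exp\parens*{(e^\lambda-1)\,\per/\nactions^2}.
\end{equation*}
Peeling off one factor at a time and applying the tower property $\per$ times yields
\begin{equation*}
\Expect\bracks*{e^{\lambda \nsteps_\per}}
\le \exp\parens*{(e^\lambda-1)\,\per^2/\nactions^2}
\le \exp\parens*{(e^\lambda-1)\,\nactions^{2\paralpha}},
\end{equation*}
where the last inequality uses $\per = \floor{\nactions^{1+\paralpha}} \le \nactions^{1+\paralpha}$, hence $\per^2/\nactions^2 \le \nactions^{2\paralpha}$.

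Finally I would apply Markov's inequality to $e^{\lambda \nsteps_\per}$ and optimize the free parameter. Writing $\mu \coloneqq \nactions^{2\paralpha}$ and choosing $\lambda = \log 6$ so that $e^\lambda = 6$,
\begin{equation*}
\Prob\parens*{\nsteps_\per \ge 6\nactions^{2\paralpha}}
\le e^{-6\lambda\mu}\,\Expect\bracks*{e^{\lambda\nsteps_\per}}
\le \exp\parens*{\mu\,(5 - 6\log 6)}
= \exp\parens*{-(6\log 6 - 5)\,\nactions^{2\paralpha}}.
\end{equation*}
Since $6\log 6 - 5 > 0$ and $\nactions^{2\paralpha} \to \infty$ as $\nactions\to\infty$ (because $\paralpha > 0$), the right-hand side tends to $0$, giving $\Prob\parens*{\nsteps_\per < 6\nactions^{2\paralpha}} \to 1$, which is exactly \eqref{eq:lim-P-Dt}. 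The only delicate point is the moment-generating-function bound in the presence of dependence, handled by the iterated-conditioning computation above; the remaining steps are a routine exponential Chernoff estimate, and the constant $6$ is comfortable since the effective mean $\mu$ is only of order $\nactions^{2\paralpha}$.
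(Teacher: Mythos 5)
Your proof is correct, and it pursues the same overall strategy as the paper (a Chernoff-type exponential bound driven by the increment estimate in \cref{eq:P-nsteps}), but it resolves the key technical point by a genuinely different route. The paper introduces a sequence of \emph{independent} Bernoulli variables $\bern_{j}$ with $\Prob\parens*{\bern_{j}=1}=j/\nactions^{2}$, asserts that \cref{eq:P-nsteps} implies the stochastic domination $\Prob\parens*{\nsteps_{\per}\ge 6\nactions^{2\paralpha}}\le\Prob\parens*{\sum_{j=1}^{\per}\bern_{j}\ge 6\nactions^{2\paralpha}}$, and then invokes the textbook multiplicative Chernoff bound, obtaining $\Prob\parens*{\nsteps_{\per}\ge 6\nactions^{2\paralpha}}\le 2^{-6\nactions^{2\paralpha}}$; this is also where the constant $6$ comes from, since that form of the bound requires the threshold to be at least six times the mean. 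You never decouple the increments: you prove the exponential bound directly on the dependent process by iterated conditioning on the filtration, which amounts to reproving the Chernoff inequality inline for a process whose conditional increment probabilities are uniformly bounded by $\per/\nactions^{2}$. The trade-off is instructive. The paper's route is shorter because it cites a known theorem, but the domination step (``\cref{eq:P-nsteps} implies\dots'') is precisely where the dependence on $\nerows_{\peralt}$ must be dealt with, and it is left implicit there — it requires the coupling argument that your moment-generating-function computation replaces. Your version is self-contained and makes that step rigorous at the cost of a few extra lines, and your final bound $\exp\parens*{-(6\log 6-5)\nactions^{2\paralpha}}$ is of the same exponential quality as the paper's $2^{-6\nactions^{2\paralpha}}$; either one vanishes as $\nactions\to\infty$, which is all the lemma requires.
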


\begin{proof}
Consider a sequence of independent Bernoulli random variables $\parens*{\bern_{j}}$ such that $\Prob\parens*{\bern_{j}=1}=j/\nactions^{2}$.
Then \cref{eq:P-nsteps} implies
\begin{equation}
\label{eq:P-D-Bern}
\Prob\parens*{\nsteps_{\per} \ge 6\nactions^{2\paralpha}}
\le\Prob\parens*{\sum_{j=1}^{\per}\bern_{j} \ge 6\nactions^{2\paralpha}}.
\end{equation}
Since
\begin{equation}
\label{eq:Sum-Bern}
\Expect\bracks*{\sum_{j=1}^{\per}\bern_{j}}
\le
\frac{\per^{2}}{\nactions^{2}}
=
\nactions^{2\paralpha},
\end{equation}
the multiplicative Chernoff bound 
\citep[see, \eg][theorem~4.4]{MitUpf:CUP2017}
implies that
\begin{equation}
\label{eq:P-3alpha} 
\Prob\parens*{\nsteps_{\per} \ge 6\nactions^{2\paralpha}}
\le2^{-6\nactions^{2\paralpha}}.
\end{equation}
\end{proof}
\begin{corollary}
\label{co:Ct-tilde-Ct}    
For any $\paralpha\in(0,1/2)$ and $\per \le \nactions^{1+\paralpha}$, we have
\begin{equation}
\label{eq:Ct-tilde-Ct}    
\frac{\necolumns_{\per}-\tnecolumns_{\per}}{\nactions}
\xrightarrow[\nactions\to\infty]{\Prob} 0.
\end{equation}
\end{corollary}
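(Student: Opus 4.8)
The plan is to deduce the statement directly from the pathwise sandwich \eqref{eq:bound-steps} together with the tail bound of \cref{le:lim-P-Dt}, using two elementary observations: that the counting process $\nsteps_{\per}$ is non-decreasing in $\per$, and that the restriction $\paralpha<1/2$ makes $\nactions^{2\paralpha}$ negligible compared with $\nactions$. No genuinely new idea is needed beyond combining these ingredients.

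First I would record that, by its very definition as the number of steps of type \ref{it:srow0-noresample0} occurring up to time $\per$, the quantity $\nsteps_{\per}$ is non-decreasing in $\per$. Hence, writing $\per_{\star}\coloneqq\lfloor\nactions^{1+\paralpha}\rfloor$, for every integer $\per\le\nactions^{1+\paralpha}$ we have $\nsteps_{\per}\le\nsteps_{\per_{\star}}$ almost surely. Combining this with \eqref{eq:bound-steps} yields, uniformly over all such $\per$, the almost-sure bound
\begin{equation*}
0\le \necolumns_{\per}-\tnecolumns_{\per}\le \nsteps_{\per_{\star}}.
\end{equation*}

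Next I would fix $\delta>0$ and estimate the probability that the rescaled gap exceeds $\delta$. Since $\paralpha<1/2$, we have $2\paralpha-1<0$, so $6\nactions^{2\paralpha-1}\to0$ and in particular $6\nactions^{2\paralpha}<\delta\nactions$ for all $\nactions$ large enough. For such $\nactions$, the inclusion
\begin{equation*}
\braces*{\necolumns_{\per}-\tnecolumns_{\per}\ge \delta\nactions}
\subseteq
\braces*{\nsteps_{\per_{\star}}\ge \delta\nactions}
\subseteq
\braces*{\nsteps_{\per_{\star}}\ge 6\nactions^{2\paralpha}}
\end{equation*}
holds, whence
\begin{equation*}
\Prob\parens*{\frac{\necolumns_{\per}-\tnecolumns_{\per}}{\nactions}\ge\delta}
\le
\Prob\parens*{\nsteps_{\per_{\star}}\ge 6\nactions^{2\paralpha}}
\xrightarrow[\nactions\to\infty]{}0,
\end{equation*}
where the limit is exactly the content of \cref{le:lim-P-Dt}, applicable since the present $\paralpha\in(0,1/2)$ lies in $(0,1)$. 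As $\delta>0$ is arbitrary and the gap is nonnegative, this gives the claimed convergence in probability.

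There is essentially no obstacle here; the only points that deserve care are the two that the argument already isolates. First, the bound must be genuinely uniform over all $\per\le\nactions^{1+\paralpha}$ rather than valid only at the single endpoint treated in \cref{le:lim-P-Dt}, which is precisely why the monotonicity of $\nsteps_{\per}$ is invoked. Second, the strengthening from $\paralpha<1$ in \cref{le:lim-P-Dt} to $\paralpha<1/2$ here is exactly what turns the $6\nactions^{2\paralpha}$ tail into a quantity that is $\smalloh(\nactions)$, so that division by $\nactions$ produces a vanishing bound.
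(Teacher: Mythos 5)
Your proof is correct and follows exactly the route the paper intends: the paper states this corollary without a separate proof precisely because it is immediate from the sandwich \eqref{eq:bound-steps} and the tail bound of \cref{le:lim-P-Dt}, with $\paralpha<1/2$ ensuring $6\nactions^{2\paralpha}=\smalloh(\nactions)$. Your explicit use of the monotonicity of $\nsteps_{\per}$ to get uniformity over all $\per\le\nactions^{1+\paralpha}$ is a careful spelling-out of the same argument, not a different approach.
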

Similarly to what was done with the process $\necolumns_{\per}$, we introduce a simplified process $\tgreene_{\per}$ which serves as an approximation of $\greene_{\per}$. More precisely, we set $\tgreene_0=0$ and define, for all $\per\in[\nactions^2]$
\begin{equation}
\label{eq:def-tgreene}
\tgreene_{\per}-\tgreene_{\per-1}=\ind_{\nerows_{\per}-\nerows_{\per-1}=1}\,\ind_{\tnecolumns_{\per}-\tnecolumns_{\per-1}=1}.
\end{equation}
Notice that the evolution of $\greene_{\per}$ can be obtained by replacing $\tnecolumns_{\per}$ and $\tnecolumns_{\per-1}$ with  $\necolumns_{\per}$ and $\necolumns_{\per-1}$, respectively, in \cref{eq:def-tgreene}.

\subsection{An \ac{ODE} approximation}
\label{suse:ODE-approx}

We now want to show that the triple $\parens*{\nerows_{\per},\tnecolumns_{\per},\tgreene_{\per}}$, when properly re-scaled (in space-time) by $\nactions$ can be approximated by the solution of an \ac{ODE} system.
To do this, we need the following lemma:

\begin{lemma}
\label{le:three-functions}  
Let $\funca,\funcb,\funcc \colon [0,1]^{3}\to[0,1]$ be defined as follows:
\begin{equation}
\label{eq:three-functions-def}
\funca(\snerows,\snecolumns,\sgreene) 
= 
1-\snerows,\quad
\funcb(\snerows,\snecolumns,\sgreene) 
= 
1-\snecolumns,\quad
\funcc(\snerows,\snecolumns,\sgreene) 
= 
(1-\snerows)(1-\snecolumns).
\end{equation}
Then, for every $\per\le\nactions^{3/2}$, we have
\begin{align}
\label{eq:rt-funca}
\abs*{\Expect\bracks*{\nerows_{\per+1}-\nerows_{\per} \mid \sigmaf_{\per}}
- \funca\parens*{\frac{\nerows_{\per}}{\nactions},\frac{\tnecolumns_{\per}}{\nactions},\frac{\tgreene_{\per}}{\nactions}}}
&\le
\frac{1}{\sqrt{\nactions}},\\
\label{eq:rt-funcb}
\abs*{\Expect\bracks*{\tnecolumns_{\per+1}-\tnecolumns_{\per} \mid \sigmaf_{\per}}
- \funcb\parens*{\frac{\nerows_{\per}}{\nactions},\frac{\tnecolumns_{\per}}{\nactions},\frac{\tgreene_{\per}}{\nactions}}}
&\leq
\frac{1}{\sqrt{\nactions}},\\
\label{eq:rt-funcc}
\abs*{\Expect\bracks*{\tgreene_{\per+1}-\tgreene_{\per} \mid \sigmaf_{\per}}
- \funcc\parens*{\frac{\nerows_{\per}}{\nactions},\frac{\tnecolumns_{\per}}{\nactions},\frac{\tgreene_{\per}}{\nactions}}}
&\le
\frac{2}{\sqrt{\nactions}}.
\end{align}
\end{lemma}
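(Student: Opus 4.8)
The plan is to use this lemma as the drift (trend) hypothesis for a differential-equation approximation of $\parens*{\nerows_{\per},\tnecolumns_{\per},\tgreene_{\per}}$, as announced in \cref{suse:ODE-approx}. For each coordinate I would compute the exact one-step conditional expectation by conditioning on the Bernoulli variables $\succrow_{\per+1}$, $\succcol_{\per+1}$ and $\noresample_{\per+1}$, and then compare it with the corresponding function $\funca,\funcb,\funcc$. The constraint $\per\le\nactions^{3/2}$ is used only to force $\per/\nactions^{2}=\bigoh(1/\sqrt{\nactions})$, which turns out to be the size of every error term. For the row count, $\nerows$ increases by $\ind_{\succrow_{\per+1}=1}$ and nothing else, so $\Expect\bracks*{\nerows_{\per+1}-\nerows_{\per}\mid\sigmaf_{\per}}=\succprobrow_{\per+1}=(\nactions-\nerows_{\per})\nactions/(\nactions^{2}-\per-2)$; subtracting $\funca=(\nactions-\nerows_{\per})/\nactions$ leaves $(\nactions-\nerows_{\per})(\per+2)/[\nactions(\nactions^{2}-\per-2)]\le(\per+2)/(\nactions^{2}-\per-2)=\bigoh(1/\sqrt{\nactions})$, which gives \cref{eq:rt-funca}.

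The column estimate \cref{eq:rt-funcb} is the crux. I split the increment of $\tnecolumns$ over the three cases of its definition. On $\braces*{\succrow_{\per+1}=1}$ the increment equals $\ind_{\succcol_{\per+1}=1}$, contributing $\succprobrow_{\per+1}\succprobcol_{\per+1}$ by independence of $\succrow_{\per+1}$ and $\succcol_{\per+1}$. The key combinatorial identity concerns $\braces*{\succrow_{\per+1}=0}$: since all $\per$ occupied cells lie in $\ematrix_{\per}$, the empty cells of the active rows sitting in a fresh column number exactly $\nerows_{\per}(\nactions-\necolumns_{\per})$ among the $\nerows_{\per}\nactions$ cells of those rows, so in \cref{al:incremental-construction}\ref{it:R-t+1=0} the first draw is simultaneously a non-resample \emph{and} a new column with conditional probability $\nerows_{\per}(\nactions-\necolumns_{\per})/(\nerows_{\per}\nactions)=\succprobcol_{\per+1}$. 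Hence the case $\braces*{\succrow_{\per+1}=0,\noresample_{\per+1}=1}$ contributes $(1-\succprobrow_{\per+1})\succprobcol_{\per+1}$, and the first two cases already sum to $\succprobcol_{\per+1}=1-\necolumns_{\per}/\nactions$. The third case resets $\tnecolumns_{\per+1}=\necolumns_{\per}$ and contributes $(\necolumns_{\per}-\tnecolumns_{\per})\Prob\parens*{\succrow_{\per+1}=0,\noresample_{\per+1}=0\mid\sigmaf_{\per}}$. Collecting and subtracting $\funcb=1-\tnecolumns_{\per}/\nactions$, the two places where $\necolumns_{\per}$ and $\tnecolumns_{\per}$ disagree combine into $(\necolumns_{\per}-\tnecolumns_{\per})\bigl(\Prob\parens*{\succrow_{\per+1}=0,\noresample_{\per+1}=0\mid\sigmaf_{\per}}-1/\nactions\bigr)$. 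I bound $\necolumns_{\per}-\tnecolumns_{\per}$ through \cref{eq:bound-steps} (indeed the reset $\tnecolumns_{\per+1}=\necolumns_{\per}$ keeps it $\bigoh(1)$) and use $\Prob\parens*{\succrow_{\per+1}=0,\noresample_{\per+1}=0\mid\sigmaf_{\per}}\le\per/\nactions^{2}$ from \cref{eq:P-nsteps}; since $\per\le\nactions^{3/2}$, the product is $\bigoh(1/\sqrt{\nactions})$, yielding \cref{eq:rt-funcb}.

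Finally, $\tgreene$ increases only when a new row and a new $\tnecolumns$-column open in the same step; a new row forces $\succrow_{\per+1}=1$, i.e.\ the first case above, where the $\tnecolumns$-increment coincides with $\ind_{\succcol_{\per+1}=1}$. Independence then gives $\Expect\bracks*{\tgreene_{\per+1}-\tgreene_{\per}\mid\sigmaf_{\per}}=\succprobrow_{\per+1}\succprobcol_{\per+1}=(\nactions-\nerows_{\per})(\nactions-\necolumns_{\per})/(\nactions^{2}-\per-2)$. Subtracting $\funcc=(1-\nerows_{\per}/\nactions)(1-\tnecolumns_{\per}/\nactions)$ and adding and subtracting $(\nactions-\nerows_{\per})(\nactions-\necolumns_{\per})/\nactions^{2}$ splits the gap into a denominator correction bounded by $(\per+2)/(\nactions^{2}-\per-2)$ and a term $(\nactions-\nerows_{\per})(\necolumns_{\per}-\tnecolumns_{\per})/\nactions^{2}\le(\necolumns_{\per}-\tnecolumns_{\per})/\nactions$; each is $\bigoh(1/\sqrt{\nactions})$, and together they account for the constant $2$ in \cref{eq:rt-funcc}.

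The main obstacle is the column case, and specifically two points in it: verifying that the resampling rule collapses the new-column probability to $\succprobcol_{\per+1}$ conditionally on $\braces*{\succrow_{\per+1}=0}$, and controlling the discrepancy $\necolumns_{\per}-\tnecolumns_{\per}$ created by the third case so that, multiplied by the small probability $\bigoh(\per/\nactions^{2})$ of that case, it stays $\bigoh(1/\sqrt{\nactions})$. The remaining two estimates are then routine denominator corrections of the same order, and only the range $\per\le\nactions^{3/2}$ is needed to close them.
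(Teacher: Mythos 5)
Your proposal is correct and follows essentially the same route as the paper's proof: compute the exact one-step conditional drifts by decomposing over the variables $\succrow_{\per+1}$, $\succcol_{\per+1}$, $\noresample_{\per+1}$, and bound every resulting error term by $\per/\nactions^{2}\le \nactions^{-1/2}$ using the restriction $\per\le\nactions^{3/2}$. If anything, your treatment of \cref{eq:rt-funcb} is slightly more careful than the paper's, which assigns zero increment to the reset case $\{\succrow_{\per+1}=0,\noresample_{\per+1}=0\}$ and identifies $\succprobcol_{\per+1}$ with $\funcb$, whereas you track the discrepancy $(\necolumns_{\per}-\tnecolumns_{\per})\bigl(\Prob(\succrow_{\per+1}=0,\noresample_{\per+1}=0\mid\sigmaf_{\per})-1/\nactions\bigr)$ explicitly and use the fact that the reset keeps $\necolumns_{\per}-\tnecolumns_{\per}$ bounded.
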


\begin{proof}
To prove \cref{eq:rt-funca} notice that, by  \cref{eq:success-row-prob}, 
\begin{equation}
\Expect\bracks*{\nerows_{\per+1}-\nerows_{\per} \mid \sigmaf_{\per}}=\frac{(\nactions-\nerows_{\per})\nactions}{\nactions^{2}-\per}.
\end{equation}
For $0\le \per\le \nactions^{3/2}$, we have
\begin{align*}
\abs*{\Expect\bracks*{\nerows_{\per+1}-\nerows_{\per} \mid \sigmaf_{\per}}
- \funca\parens*{\frac{\nerows_{\per}}{\nactions},\frac{\tnecolumns_{\per}}{\nactions},\frac{\tgreene_{\per}}{\nactions}}}=\abs*{\frac{(\nactions-\nerows_{\per})\nactions}{\nactions^{2}-\per}-\left(1-\frac{\nerows_\per}{\nactions}\right)}=\frac{\nactions-\nerows_{\per}}{\nactions}\cdot \frac{\per}{\nactions^2-\per}\leq \frac{\per}{\nactions^2}\leq\frac{1}{\sqrt{\nactions}}\,.
\end{align*}

To prove \cref{eq:rt-funcb} recall \cref{eq:success-col-prob} and note that
\begin{align*}
\Expect\bracks*{\tnecolumns_{\per+1}-\tnecolumns_{\per}\mid \sigmaf_{\per}}&=
\Expect\bracks*{\tnecolumns_{\per+1}-\tnecolumns_{\per}\mid \succrow_{\per+1}=1,\sigmaf_{\per}}\Prob(\succrow_{\per+1}=1\mid\sigmaf_{\per})
\\&\quad+\Expect\bracks*{\tnecolumns_{\per+1}-\tnecolumns_{\per}\mid  \succrow_{\per+1}=0,\noresample_{\per+1}=1,\sigmaf_{\per}}\Prob(\succrow_{\per+1}=0,\noresample_{\per+1}=1\mid\sigmaf_{\per})
\\&\quad+\Expect\bracks*{\tnecolumns_{\per+1}-\tnecolumns_{\per}\mid  \succrow_{\per+1}=0,\noresample_{\per+1}=0,\sigmaf_{\per}}\Prob(\succrow_{\per+1}=0,\noresample_{\per+1}=0\mid\sigmaf_{\per})
\\&=\succprobcol_{\per+1}\Prob(\succrow_{\per+1}=1\mid\sigmaf_{\per})+\succprobcol_{\per+1}\Prob(\succrow_{\per+1}=0,\noresample_{\per+1}=1\mid\sigmaf_{\per})+0
\\&=\succprobcol_{\per+1}(1-\Prob(\succrow_{\per+1}=0, \noresample_{\per+1}=0\mid\sigmaf_{\per}))\,.
\end{align*}
Hence, by \eqref{eq:P-nsteps}, for $0\leq \per\leq \nactions^\frac32$ we have
\begin{align*}
\abs*{\Expect\bracks*{\tnecolumns_{\per+1}-\tnecolumns_{\per} \mid \sigmaf_{\per}}
- \funcb\parens*{\frac{\nerows_{\per}}{\nactions},\frac{\tnecolumns_{\per}}{\nactions},\frac{\tgreene_{\per}}{\nactions}}}&=\abs*{\Expect\bracks*{\tnecolumns_{\per+1}-\tnecolumns_{\per}\mid \sigmaf_{\per}}-\succprobcol_{\per+1}}
\\&\leq\Prob(\succrow_{\per+1}=0,\noresample_{\per+1}=0\mid\sigmaf_{\per})\leq \frac{\per}{\nactions^2}\leq \frac{1}{\sqrt{\nactions}}\,.
\end{align*}

Finally, \cref{eq:rt-funcc} follows from \cref{eq:def-tgreene}. Indeed, being the sequences $(\nerows_\per)_{\per\geq 0}$ and $(\tnecolumns_\per)_{\per\geq 0}$ independent, we have
\begin{align*}
&\abs*{\Expect\bracks*{\tgreene_{\per+1}-\tgreene_{\per} \mid \sigmaf_{\per}}
- \funcc\parens*{\frac{\nerows_{\per}}{\nactions},\frac{\tnecolumns_{\per}}{\nactions},\frac{\tgreene_{\per}}{\nactions}}}=
\\&=
\abs*{\Expect\bracks*{\nerows_{\per+1}-\nerows_{\per} \mid \sigmaf_{\per}}\Expect\bracks*{\tnecolumns_{\per+1}-\tnecolumns_{\per} \mid \sigmaf_{\per}}
- \funca\parens*{\frac{\nerows_{\per}}{\nactions},\frac{\tnecolumns_{\per}}{\nactions},\frac{\tgreene_{\per}}{\nactions}}\funcb\parens*{\frac{\nerows_{\per}}{\nactions},\frac{\tnecolumns_{\per}}{\nactions},\frac{\tgreene_{\per}}{\nactions}}}
\\&\leq \abs*{\Expect\bracks*{\nerows_{\per+1}-\nerows_{\per} \mid \sigmaf_{\per}}
- \funca\parens*{\frac{\nerows_{\per}}{\nactions},\frac{\tnecolumns_{\per}}{\nactions},\frac{\tgreene_{\per}}{\nactions}}}+\abs*{\Expect\bracks*{\tnecolumns_{\per+1}-\tnecolumns_{\per} \mid \sigmaf_{\per}}
- \funcb\parens*{\frac{\nerows_{\per}}{\nactions},\frac{\tnecolumns_{\per}}{\nactions},\frac{\tgreene_{\per}}{\nactions}}}\leq \frac{2}{\sqrt{\nactions}}\,.
\end{align*}
\end{proof}

At this point, the main approximation tools which we will exploit in the following is summarized in the following result.
\begin{proposition}
\label{le:Warnke}  
Let $\parens*{\snerows_{\peralt},\snecolumns_{\peralt},\sgreene_{\peralt}}_{\peralt\ge 0}$ be the solution of the Cauchy system
\begin{equation}
\label{eq:Cauchy-Warnke}
\begin{cases}
\dot\snerows_{\peralt} = 1 - \snerows_{\peralt},\\
\dot\snecolumns_{\peralt} = 1 - \snecolumns_{\peralt},\\
\dot\sgreene_{\peralt} = (1 - \snerows_{\peralt})(1 - \snecolumns_{\peralt}),\\
\snerows_{0} = \snecolumns_{0} = \sgreene_{0} = 0,
\end{cases}
\end{equation}
that is,
\begin{equation}
\label{eq:sol-Cauchy-Warnke}
\snerows_{\peralt} = \snecolumns_{\peralt} = 1 - \expo^{-\peralt},\quad 
\sgreene_{\peralt} = \frac{1}{2} \parens*{1-\expo^{-2\peralt}}.
\end{equation}
Then, for any constant $\constT>0$, we have
\begin{equation}
\label{eq:Warnke}  
\Prob\parens*{\forall \per\le\constT\nactions,
\abs*{\frac{\nerows_{\per}}{\nactions}-\snerows_{\per/\nactions}}
+
\abs*{\frac{\tnecolumns_{\per}}{\nactions}-\snecolumns_{\per/\nactions}}
+
\abs*{\frac{\tgreene_{\per}}{\nactions}-\sgreene_{\per/\nactions}}
\le 
18 \expo^{2\constT}\nactions^{-1/3}}
\ge 
1 - 6\exp\braces*{-\frac{\nactions^{1/3}}{8\constT}}.
\end{equation}
\end{proposition}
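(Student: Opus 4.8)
The plan is to read \cref{le:Warnke} as a direct instance of the differential equation method in the sharp form due to Warnke, so that the only real work is to check its hypotheses and then tune the free parameters to reproduce the stated constants. Write $\profile_{\per}=\parens{\nerows_{\per},\tnecolumns_{\per},\tgreene_{\per}}$ for the tracked triple and abbreviate the solution \eqref{eq:sol-Cauchy-Warnke} of the Cauchy system \eqref{eq:Cauchy-Warnke} as $\parens{\snerows_{\peralt},\snecolumns_{\peralt},\sgreene_{\peralt}}$. The method asks for three ingredients on a bounded time–space domain: a uniform bound $\beta$ on the one-step increments, a trend (one-step drift) matching the vector field $\parens{\funca,\funcb,\funcc}$ up to a small additive error $\delta$, and a Lipschitz bound $L$ on that vector field. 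Everything below is the verification of these three items, followed by a single application of the theorem.

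First I would fix the domain to be $[0,1]^{3}$ with time ranging over $[0,\constT]$. Each of $\nerows_{\per},\tnecolumns_{\per},\tgreene_{\per}$ is nondecreasing and bounded above by $\nactions$, so $\profile_{\per}/\nactions$ never leaves $[0,1]^{3}$ and no stopping time is needed to confine the process. Next, the increments are bounded by $\beta=1$: at each step $\nerows_{\per}$ changes by $0$ or $1$, $\tgreene_{\per}$ changes by $0$ or $1$ by \eqref{eq:def-tgreene}, and the increment of $\tnecolumns_{\per}$ equals that of $\necolumns_{\per}$ (hence lies in $\braces{0,1}$) except on the rare ``resample'' steps, which by \eqref{eq:bound-steps} perturb the running difference $\necolumns_{\per}-\tnecolumns_{\per}$ by at most one unit each. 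The trend hypothesis is exactly \cref{le:three-functions}: for every $\per\le\nactions^{3/2}$ the three conditional expectations match $\funca,\funcb,\funcc$ evaluated at $\profile_{\per}/\nactions$ up to an additive error $2/\sqrt{\nactions}$, so we take $\delta=2/\sqrt{\nactions}$. Finally, the vector field is Lipschitz on $[0,1]^{3}$: $\funca$ and $\funcb$ are $1$-Lipschitz, and $\funcc(\snerows,\snecolumns,\sgreene)=(1-\snerows)(1-\snecolumns)$ has gradient of norm at most $2$ there, so the system has Lipschitz constant $L\le 2$ — this is precisely the origin of the factor $\expo^{2\constT}$ in the conclusion.

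With the hypotheses in hand I would invoke Warnke's estimate on the window $\per\le\constT\nactions$, which is admissible because $\constT\nactions\le\nactions^{3/2}$ once $\nactions\ge\constT^{2}$, so \cref{le:three-functions} applies throughout. For each fluctuation parameter $\lambda>0$ the scheme delivers a failure probability of order $2\cdot 3\cdot\exp\braces{-\lambda^{2}\nactions/(8\constT\beta^{2})}$ together with a rescaled deviation bounded by $\expo^{L\constT}$ times the accumulated trend error plus the fluctuation scale, i.e.\ essentially $\expo^{2\constT}\parens{\constT\delta+\lambda}$. Choosing $\lambda=\nactions^{-1/3}$ turns the exponent into $\nactions^{1/3}/(8\constT)$, giving the failure probability $6\exp\braces{-\nactions^{1/3}/(8\constT)}$ (the $6=2\cdot 3$ coming from the three coordinates); since the trend contribution $\constT\delta=2\constT\nactions^{-1/2}$ is of smaller order than $\lambda=\nactions^{-1/3}$, the deviation is dominated by $\expo^{2\constT}\nactions^{-1/3}$, and absorbing all universal constants into the prefactor $18$ yields \eqref{eq:Warnke}.

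The routine checks (boundedness, increments, Lipschitz constant) are immediate, so the genuine content lives entirely in the black-box inequality; the main obstacle is bookkeeping. Concretely, I would need to track how $\delta$ and $\beta$ propagate through the Gronwall amplification $\expo^{L\constT}$, confirm that $\lambda=\nactions^{-1/3}$ simultaneously produces the exponent $\nactions^{1/3}/(8\constT)$ and a deviation of the claimed order $\nactions^{-1/3}$ (rather than the $\nactions^{-1/2}$ one might naively extract from $\delta$ and the fluctuation scale alone), and verify that the three numerical constants $18$, $6$, and $1/8$ come out exactly. Should one prefer to avoid citing the sharpest form of the theorem, the same bound can be obtained by hand: set $\boldsymbol{e}_{\per}=\profile_{\per}/\nactions-\parens{\snerows_{\per/\nactions},\snecolumns_{\per/\nactions},\sgreene_{\per/\nactions}}$, decompose the increment of $\boldsymbol{e}_{\per}$ into a predictable part controlled by the Lipschitz bound plus $\delta$ (a discrete Gronwall recursion with multiplier $1+L/\nactions$) and a martingale part with increments of size $\bigoh(1/\nactions)$, and close the argument with the Azuma–Hoeffding inequality and a union bound over the $\bigoh(\nactions)$ steps.
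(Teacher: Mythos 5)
Your proposal follows exactly the paper's route: the published proof is a direct appeal to Wormald's differential equation method in Warnke's formulation (his Theorem~2), with \cref{le:three-functions} supplying the trend hypothesis, which is precisely your black-box application. Your write-up is in fact more detailed than the paper's (explicit verification of bounded increments, the Lipschitz constant $L\le 2$ behind $\expo^{2\constT}$, and the choice $\lambda=\nactions^{-1/3}$ producing the stated exponent), so it is correct and takes essentially the same approach.
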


\begin{proof}
The result can be deduced using the so-called \emph{Wormald differential equation method}  \citep[see][]{Wor:AAP1995}. 
For simplicity, we adopt here the same formulation as in  
\citet[theorem~2]{War:arXiv2019}, and use the bounds proved therein. 
With this method, the evolution of the space-time rescaled random process can be uniformly first-order approximated by the solution of an associated Cauchy problem. 

The main requirement is that the expected increment of the original process be controlled by certain function (which plays the role of the derivative in the associated Cauchy problem) up to a vanishingly small error, as we showed in
\cref{le:three-functions}. 
\end{proof}

We can now prove \cref{pr:asymptotic-epsilo-IC}.

\begin{proof}[Proof of \cref{pr:asymptotic-epsilo-IC}]
Fix $\varepsilon\in(0,1/2)$. If we define
\begin{equation}
\label{eq:epsilon-time-tilde}
\widetilde{\stime}_{\varepsilon\nactions} \coloneqq \inf \braces*{\per \ge 0 \colon \tgreene_{\per} = \floor{\varepsilon\nactions}},
\end{equation}
then, by \cref{le:Warnke}, we obtain
\begin{equation}
\label{eq:stopping-time-concentrates-tilde}
\frac{\widetilde{\stime}_{\varepsilon\nactions}}{\nactions} 
\xrightarrow[\nactions\to\infty]{\Prob}
\frac{1}{2}\log\parens*{\frac{1}{1-2\varepsilon}}.
\end{equation}
In light of \eqref{eq:stopping-time-concentrates-tilde}, to prove
\ref{it:pr:asymptotic-epsilo-IC-a} it suffices to show that, for all $\per \le \nactions^{7/6}$
\begin{equation}\label{eq:last}
\frac{\abs*{\tgreene_\per-\greene_{\per}}}{\nactions}\xrightarrow[\nactions\to\infty]{\Prob}0.
\end{equation}
Notice that \cref{eq:last} follows by \cref{co:Ct-tilde-Ct} and \cref{eq:def-tgreene}.
At this point, the convergence in \ref{it:pr:asymptotic-epsilo-IC-b} is a consequence of \ref{it:pr:asymptotic-epsilo-IC-a}, \cref{le:Warnke}, and \cref{co:Ct-tilde-Ct}.
\end{proof}


\section{Proofs of the main results}
\label{se:main-proofs}

To help the reader understand the idea behind the proof of \cref{th-BoA-espilon}, we consider an example that illustrates the behavior of a \ac{BRD} in its first steps.

\begin{example}
\label{ex:incremental-construction}
We consider the behavior of the incremental construction in a small game.
\cref{fig:pic1} shows it pictorially.
The probability that $\BRD(\argdot)$ travels on the exact path shown in \cref{fig:pic1} is
\begin{equation}
\label{eq:BRD-fig-1}
\Prob\parens*{\BRD(\per)=\profile_{\per}, \text{ for }\per=1,\dots,6 \mid \BRD(0)=\profile_{0}} =
\frac{1}{\nactions}\cdot
\frac{1}{2\nactions-1}\cdot
\frac{1}{3\nactions-2}\cdot
\frac{1}{4\nactions-4}\cdot
\frac{1}{5\nactions-6}\cdot
\frac{1}{6\nactions-9}
\end{equation}
To wit, $\BRD(1)=\profile_{1}$ if $\potential(\profile_{1})$ is the smallest in its column, which happens with probability $1/\nactions$.
Given this event, we have that $\BRD(2)=\profile_{2}$ if $\potential(\profile_{2})$ is the smallest of its row and the previously visited column, which happens with probability $1/(2\nactions-1)$.
Given $\BRD(1)=\profile_{1}$ and $\BRD(2)=\profile_{2}$, we have that $\BRD(3)=\profile_{3}$ if $\potential(\profile_{3})$ is the smallest of all previously visited rows column, which happens with probability $1/(3\nactions-2)$.
Given $\BRD(i)=\profile_{i}$, for all $i\in\braces*{1,2,3}$, we have that $\BRD(4)=\profile_{4}$ if $\potential(\profile_{4})$ is the smallest of all previously visited rows column, which happens with probability $1/(4\nactions-4)$.
And so on.
\begin{figure}[h!]
\centering
\includegraphics[width=10cm]{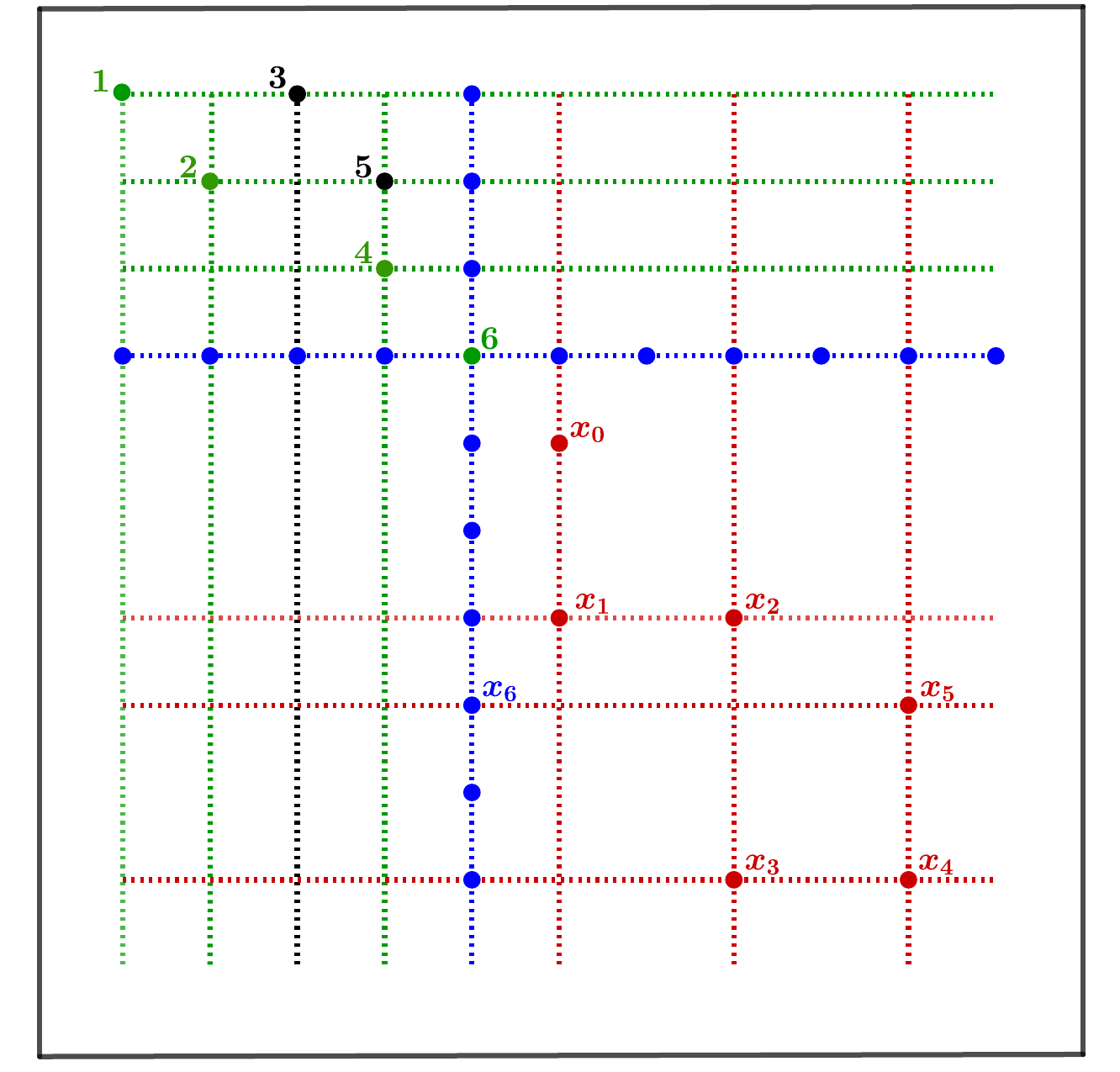}
\caption{In this example $\varepsilon\nactions=4$, $\stime_{\varepsilon\nactions}=6$, $\nerows_{\stime_{\varepsilon\nactions}}=4$ and $\necolumns_{\stime_{\varepsilon\nactions}}=5$.  
Green dots represent \acl{PNE}. 
When the \ac{BRD} reaches an action profile on a green dotted line, it reaches the \acl{PNE} on the same line. 
If the \ac{BRD} reaches an action profile on a blue dotted line, then it reaches the $\varepsilon\nactions$-th equilibrium. 
We start the process $\BRD(\argdot)$ at $\profile_{0}=(\nerows_{\stime_{\varepsilon\nactions}}+1,\necolumns_{\stime_{\varepsilon\nactions}}+1)=(5,6)$; the profiles $(\profile_{0},\profile_{1},\dots,\profile_{6})$ represent the trajectory of the \ac{BRD}.
The best-response dynamics reaches  the row of the $\varepsilon\nactions$-th equilibrium in $6$ steps.
Hence it reaches the $\varepsilon\nactions$-th equilibrium in $7$ steps. 
Strategy profiles on red dotted lines are explored by the best-response dynamics starting at $\profile_{0}$. }
	\label{fig:pic1}
\end{figure}
\end{example}

Consider a finite sequence of action profiles $\parens*{\profile_{\per}}_{\per=0}^{\Per}$ such that,  $\profile_{\per}=(\necolumns_{\per},\nerows_{\per})$, with 
$\necolumns_{\per}=\necolumns_{\per-1}$ for $\per$ odd and 
$\nerows_{\per}=\nerows_{\per-1}$ for $\per$ even.
Then
\begin{equation}
\label{eq:prob-BRD-T}
\Prob\parens*{\BRD(\per)=\profile_{\per}, \text{ for } \per=1,\dots,\Per} = 
\prod_{\per=1}^{\Per}
\frac{1}{\per\nactions-\funch(\per)},
\end{equation}
with
\begin{equation}
\label{eq:h(t)}  
\funch(\per)
\coloneqq 
\sum_{i=1}^{\per}\floor*{\frac{i}{2}}
\leq \frac{\per^2}{2}.
\end{equation}
Fix now $\varepsilon\in(0,1/2)$ and consider the $\varepsilon$-stopped incremental construction of \cref{de:epsilon-IC}.
First of all, notice that, by construction of the \ac{BRD}, all the elements in one column belong to the same basin of attraction. 
In particular, all profiles in  columns $1,\dots,\necolumns_{\stime_{\varepsilon\nactions}}-1$ will be attracted by equilibria $\equil_{\equilindex}$ with $\equilindex<\varepsilon\nactions$.
Profiles in column $\necolumns_{\stime_{\varepsilon\nactions}}$ will be attracted by $\equil_{\varepsilon\nactions}$.

We now focus on the columns $\necolumns_{\stime_{\varepsilon\nactions}}+1,\dots,\necolumns_{\nactions}$.
Consider the process $\BRD(\argdot)$ that starts at the profile $\profile_{0} = (\nerows_{\stime_{\varepsilon\nactions}}+1,\necolumns_{\stime_{\varepsilon\nactions}}+1)$.
The probability that, starting from $\profile_{0}$, the \ac{BRD} ends up in the $\varepsilon\nactions$-th \ac{NE} is obtained by summing the probability that the \ac{BRD} arrives in the $\varepsilon\nactions$-th \ac{NE} in a given number of steps. Notice that, starting at $\profile_0$, at least two steps are necessary to reach any \ac{NE}.  Therefore
\begin{equation}
\begin{split}
&\Prob\parens*{\BRD(2\nactions)  = \equil_{\varepsilon\nactions}
\mid     
\BRD(0)=\profile_{0},
\sigmaf_{\stime_{\varepsilon\nactions}}
} \\
&\qquad= \sum_{\nstepsl=0}^{2\nactions-2} \Prob\parens*{\BRD(\nstepsl+2)  = \equil_{\varepsilon\nactions}, \BRD(\nstepsl+1) \neq \equil_{\varepsilon\nactions}
\mid   
\BRD(0)=\profile_{0},
\sigmaf_{\stime_{\varepsilon\nactions}}
}.
\end{split}
\end{equation}
where the probabilities on the right-hand side, by the argument in \cref{eq:BRD-fig-1}, can be computed to be
\begin{equation}
\label{eq:P-BRD_tau-ell}
\begin{split}
&\Prob\parens*{\BRD(\nstepsl+2)  = \equil_{\varepsilon\nactions}
\mid   
\BRD(0)=\profile_{0},
\sigmaf_{\stime_{\varepsilon\nactions}}
} \\
&\quad=
\frac{1}{(\nstepsl+1)\nactions-\funch(\nstepsl+1)}
\prod_{\per=1}^{\nstepsl} 
\frac{\nactions-\nerows_{\stime_{\varepsilon\nactions}}\ind_{\braces*{\per \text{ odd}}}
-\necolumns_{\stime_{\varepsilon\nactions}}\ind_{\braces*{\per \text{ even}}}
-\floor*{\frac{\per}{2}}}{\per\nactions-\funch(\per)}
\, ,
\end{split}
\end{equation}
where $\funch(\argdot)$ is defined as in \ref{eq:h(t)}.
This implies
\begin{equation}
\label{eq:P-BRD_tau}
\begin{split}
&\Prob\parens*{\BRD(2\nactions)  = \equil_{\varepsilon\nactions}
\mid     
\BRD(0)=\profile_{0},
\sigmaf_{\stime_{\varepsilon\nactions}}
} \\
&\quad=
\sum_{\nstepsl=0}^{2\nactions-2}
\frac{1}{(\nstepsl+1)\nactions-\funch(\nstepsl+1)}
\prod_{\per=1}^{\nstepsl} \frac{\nactions-\nerows_{\stime_{\varepsilon\nactions}}\ind_{\braces*{\per \text{ odd}}}
-\necolumns_{\stime_{\varepsilon\nactions}}\ind_{\braces*{\per \text{ even}}}
-\floor*{\frac{\per}{2}}}
{\per\nactions-\funch(\per)}
\,,
\end{split} 
\end{equation}
where the product over an empty set of numbers is taken to be $1$.

Once we consider all possible positions for $\profile_{0}$, we obtain
\begin{equation}
\label{eq:expect-basin-tau}
\begin{split}
&\frac{1}{\nactions}\Expect\bracks*{\abs*{\BoA(\equil_{\varepsilon\nactions})} \mid \sigmaf_{\stime_{\varepsilon\nactions}}} \\
&\quad
=
1 + 
\parens*{\nactions - \necolumns_{\stime_{\varepsilon\nactions}}}\sum_{\nstepsl=0}^{2\nactions-2}
\frac{1}{(\nstepsl+1)\nactions-\funch(\nstepsl+1)}
\prod_{\per=1}^{\nstepsl} \frac{\nactions-\nerows_{\stime_{\varepsilon\nactions}}\ind_{\braces*{\per \text{ odd}}}
-\necolumns_{\stime_{\varepsilon\nactions}}\ind_{\braces*{\per \text{ even}}}
-\floor*{\frac{\per}{2}}}
{\per\nactions-\funch(\per)}
\,,
\end{split}
\end{equation}
where $1$ stands for the column in which the equilibrium $\equil_{\varepsilon \nactions}$ lies, while the other term takes into account each of the $\nactions-\necolumns_{\stime_{\varepsilon\nactions}}$ columns which are empty at the end of the construction, multiplied by the probability in \cref{eq:P-BRD_tau}. Notice, indeed, that by definition the first $\necolumns_{\stime_{\varepsilon\nactions}}-1$ columns are not in the basin of attraction of $\equil_{\varepsilon\nactions}$.
\begin{lemma}
\label{le:approx-basin}
For all $\varepsilon\in(0,1/2)$ and for all 
\begin{equation}
\label{eq:cK-rK-in}  
\snecolumns(\nactions),\snerows(\nactions) \in \naturals \cap \parens*{1-\sqrt{1-2\varepsilon}-\nactions^{-1/4},1-\sqrt{1-2\varepsilon}+\nactions^{-1/4}}\,,
\end{equation}
we have
\begin{equation}
\label{eq:approx-basin}   
\lim_{\nactions\to\infty}
1 +\sum_{\nstepsl=0}^{2\nactions -2}
\frac{\nactions - \snecolumns}{(\nstepsl+1)\nactions-\funch(\nstepsl+1)}
\prod_{\per=1}^{\nstepsl} \frac{\nactions-\snerows\ind_{\braces*{\per \textup{ odd}}}
-\snecolumns\ind_{\braces*{\per \textup{ even}}}
-\floor*{\frac{\per}{2}}}
{\per\nactions-\funch(\per)}
=\exp\braces*{\sqrt{1-2\varepsilon}},
\end{equation}
where we omitted the explicit dependence of $\snecolumns$ and $\snerows$ from $\nactions$.
\end{lemma}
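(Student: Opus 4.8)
The plan is to read the sum in \eqref{eq:approx-basin} as a truncated series for the exponential and to justify a term-by-term passage to the limit. Set $\beta\coloneqq\sqrt{1-2\varepsilon}\in(0,1)$; the hypothesis \eqref{eq:cK-rK-in} gives $\snerows=\nactions(1-\beta)+\smalloh(\nactions)$ and $\snecolumns=\nactions(1-\beta)+\smalloh(\nactions)$. For each fixed $\nstepsl$, every factor of the product converges, since for fixed $\per$
\begin{equation*}
\frac{\nactions-\snerows\,\ind_{\braces*{\per\textup{ odd}}}-\snecolumns\,\ind_{\braces*{\per\textup{ even}}}-\floor{\per/2}}{\per\nactions-\funch(\per)}
=\frac{\nactions\beta+\smalloh(\nactions)}{\per\nactions+\bigoh(1)}
\xrightarrow[\nactions\to\infty]{}\frac{\beta}{\per},
\end{equation*}
so that $\prod_{\per=1}^{\nstepsl}(\cdots)\to\beta^{\nstepsl}/\nstepsl!$. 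The prefactor $(\nactions-\snecolumns)/\parens*{(\nstepsl+1)\nactions-\funch(\nstepsl+1)}$ tends to $\beta/(\nstepsl+1)$, whence the $\nstepsl$-th summand converges to $\beta^{\nstepsl+1}/(\nstepsl+1)!$. Summing these limits and restoring the leading $1$ gives $1+\sum_{m\ge1}\beta^{m}/m!=\expo^{\beta}=\exp\braces*{\sqrt{1-2\varepsilon}}$, the asserted value.

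To make the interchange of limit and sum rigorous I would appeal to dominated convergence for series (Tannery's theorem): it suffices to bound the $\nstepsl$-th summand, uniformly in $\nactions$, by a sequence summable in $\nstepsl$. Using the elementary bound $\funch(\per)=\floor{\per/2}\ceil{\per/2}\le\per^{2}/4$ one gets $\per\nactions-\funch(\per)\ge\per(\nactions-\per/4)\ge\per\nactions/2$ for all $\per\le2\nactions$, while each numerator is at most $\nactions-\min\braces*{\snerows,\snecolumns}\le\nactions\beta+\smalloh(\nactions)\le(\beta+1)\nactions$ for $\nactions$ large. Hence every factor of the product is at most $C/\per$ with $C\coloneqq2(\beta+1)$, the prefactor is at most $C/(\nstepsl+1)$, and the full summand is dominated by $C^{\nstepsl+1}/(\nstepsl+1)!$, a sequence whose total mass is at most $\expo^{C}$. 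Combined with the pointwise convergence of the previous paragraph, this yields \eqref{eq:approx-basin}.

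The point requiring care is the regime in which $\nstepsl$ is comparable to $\nactions$, since the factor bound was proved only for $\per\le2\nactions$ and, for $\per$ large, the numerators in the product can turn negative. Here I would use that the product truncates by itself: along the odd (row) steps the numerator $\nactions-\snerows-\floor{\per/2}$ decreases by one every two steps and therefore hits exactly $0$ at $\per=2(\nactions-\snerows)+1=2\nactions\beta+\smalloh(\nactions)$, and analogously along the even (column) steps; since $\beta<1$, this occurs strictly before $\per=2\nactions$. Consequently the summand vanishes identically for all larger $\nstepsl$, so every nonzero summand lives in the range $\per\le2\nactions$ where the bound $C/\per$ applies and no spurious negative factor is ever produced. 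This confines the entire sum to the regime controlled above, and is the only genuinely delicate bookkeeping in the argument: one must keep the dominating bound and the pointwise limit valid simultaneously over the whole summation range $\nstepsl\le2\nactions-2$ despite the $\smalloh(\nactions)$ slack in $\snerows,\snecolumns$ and the self-truncation of the product.
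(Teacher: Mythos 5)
Your proof is correct, and its technical engine differs from the paper's. The paper first uses termwise monotonicity in $\snerows,\snecolumns$ to sandwich the sum between two members of the one-parameter family $\LKzeta$ defined in \cref{eq:LKzeta} (see \cref{eq:LKed}), and then evaluates $\LKzeta$ by cutting the sum at a scale $\Per_{\nactions}\to\infty$ with $\Per_{\nactions}=\smalloh(\nactions)$: the head behaves like $\sum_{\nstepsl\le\Per_{\nactions}}\zetasym^{\nstepsl+1}/(\nstepsl+1)!$, the tail is $\smalloh(1)$, whence $\LKzeta\to\expo^{\zetasym}-1$ at $\zetasym=\sqrt{1-2\varepsilon}\pm\nactions^{-1/4}$. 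You instead take fixed-$\nstepsl$ termwise limits and justify the interchange of limit and summation by Tannery's theorem, with the explicit dominating sequence $C^{\nstepsl+1}/(\nstepsl+1)!$, $C=2\bigl(1+\sqrt{1-2\varepsilon}\bigr)$, obtained from $\funch(\per)=\floor{\per/2}\ceil{\per/2}\le\per^{2}/4$; like the paper, you read \cref{eq:cK-rK-in} as a constraint on $\snerows/\nactions$ and $\snecolumns/\nactions$ (an evident typo in the statement). The sandwich buys the paper a single one-parameter asymptotic computation; Tannery buys you uniform control of the whole range $\nstepsl\le 2\nactions-2$ with no cutoff and no monotonicity step. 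More importantly, your final paragraph settles a point about which the printed proof is silent: beyond $\per\approx 2\sqrt{1-2\varepsilon}\,\nactions$ the numerators turn negative, and both the termwise sandwich \cref{eq:LKed} and any domination bound are legitimate only because the product vanishes identically from the first zero factor onward---which holds precisely because $\snerows,\snecolumns$ are integers, so the odd and even numerators step down through the integers and hit $0$ exactly before $\per=2\nactions$. Your self-truncation observation supplies this justification, so in that respect your write-up is more careful than the paper's own proof.
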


\begin{proof}
Define, for $\zetasym\in\naturals$,
\begin{equation}
\label{eq:LKzeta}   
\LKzeta \coloneqq
\sum_{\nstepsl=0}^{2\nactions - 2}
\frac{\zetasym\nactions}{(\nstepsl+1)\nactions-\funch(\nstepsl+1)}
\prod_{\per=1}^{\nstepsl} \frac{\zetasym\nactions
-\floor*{\frac{\per}{2}}}
{\per\nactions-\funch(\per)}
\,.
\end{equation}
We have 
\begin{equation}
\label{eq:LKed}
\begin{split}
\sumprod_{\nactions}\parens*{\floor*{(1-\sqrt{1-2\varepsilon}-\nactions^{-1/4})}} 
&\le
\sum_{\nstepsl=0}^{2\nactions-2}
\frac{\nactions - \snecolumns}{(\nstepsl+1)\nactions-\funch(\nstepsl+1)}
\prod_{\per=1}^{\nstepsl} \frac{\nactions-\snerows\ind_{\braces*{\per \textup{ odd}}}
-\snecolumns\ind_{\braces*{\per \textup{ even}}}
-\floor*{\frac{\per}{2}}}
{\per\nactions-\funch(\per)}\\
&\le
\sumprod_{\nactions}\parens*{\ceil*{(1-\sqrt{1-2\varepsilon}+\nactions^{-1/4})}}
\,.
\end{split}
\end{equation}

Fixing a sequence $\Per_{\nactions}$ such that $\lim_{\nactions\to\infty}\Per_{\nactions}=\infty$
and $\Per_{\nactions}=\smalloh(\nactions)$, we get
\begin{equation}
\label{eq:LKzeta=} 
\begin{split}
\LKzeta 
&= 
\sum_{\nstepsl=0}^{2\nactions - 2}\frac{\zetasym}{(\nstepsl+1)-\bigoh(\nstepsl^{2}/\nactions)} \prod_{\per=1}^{\nstepsl} \frac{\zetasym - \bigoh(\per^{2}/\nactions)
}
{\per\nactions-\bigoh(\per/\nactions)}
\\
&=
\sum_{\nstepsl=0}^{\Per_{\nactions}}
\frac{\zetasym}{(\nstepsl+1)-\bigoh(\nstepsl^{2}/\nactions)}
\prod_{\per=1}^{\nstepsl} \frac{\zetasym - \bigoh(\per^{2}/\nactions)
}
{\per\nactions-\bigoh(\per/\nactions)}
 \\
&\quad+
\sum_{\nstepsl=\Per_{\nactions}+1}^{2\nactions - 2}
\frac{\zetasym}{(\nstepsl+1)-\bigoh(\nstepsl^{2}/\nactions)}
\prod_{\per=1}^{\nstepsl} \frac{\zetasym - \bigoh(\per^{2}/\nactions)
}
{\per\nactions-\bigoh(\per/\nactions)}
 \\
&\sim 
\sum_{\nstepsl=0}^{\Per_{\nactions}} \frac{\zetasym^{\nstepsl+1}}{(\nstepsl+1)!}
+\smalloh(1)\\
&\sim
\expo^{\zetasym}-1+\smalloh(1),
\end{split}
\end{equation}
which concludes the proof by choosing $\zetasym=\sqrt{1-2\varepsilon}\pm\nactions^{-1/4}$ and taking the limit $\nactions\rightarrow \infty$.
\end{proof}

\begin{proof}[Proof of \cref{th-BoA-espilon}]
Fix $\varepsilon\in(0,1/2)$ and $\delta\in(0,1)$. 
Consider the event
\begin{equation}
\label{eq:A-eps-delta}
\evented
\coloneqq \braces*{\abs*{\frac{\nerows_{\stime_{\varepsilon\nactions}}}{\nactions} -(1-\sqrt{1-2\varepsilon}) }\le \delta}\cap\braces*{\abs*{\frac{\necolumns_{\stime_{\varepsilon\nactions}}}{\nactions} -(1-\sqrt{1-2\varepsilon}) }\le \delta} ,
\end{equation}
and notice that $\evented$ is measurable with respect to $\sigmaf_{\stime_{\varepsilon \nactions}}$. 
Moreover, by \cref{pr:asymptotic-epsilo-IC} \cref{it:pr:asymptotic-epsilo-IC-b}
\begin{equation}
\label{eq:prob-A}
\lim_{\nactions\to\infty}\Prob(\evented)=1.
 \end{equation}
By the tower property
\begin{align}
    \Expect&\bracks*{\frac{1}{\nactions}\abs*{\BoA_{\nactions}(\equil_{\floor{\varepsilon\nactions}})}}=\Expect\bracks*{\Expect\bracks*{\frac{1}{\nactions}\abs*{\BoA_{\nactions}(\equil_{\floor{\varepsilon\nactions}})}\mid \sigmaf_{\stime_{\varepsilon \nactions}}}}\\
\label{eq:intermediate}
&\qquad=\Expect\bracks*{\Expect\bracks*{\frac{1}{\nactions}\abs*{\BoA_{\nactions}(\equil_{\floor{\varepsilon\nactions}})}\mid \sigmaf_{\stime_{\varepsilon \nactions}}}\ind_{\evented}}+\Expect\bracks*{\Expect\bracks*{\frac{1}{\nactions}\abs*{\BoA_{\nactions}(\equil_{\floor{\varepsilon\nactions}})}\mid \sigmaf_{\stime_{\varepsilon \nactions}}}\ind_{\evented^{\rm c}}}\\
    &\qquad\le\Expect\bracks*{\Expect\bracks*{\frac{1}{\nactions}\abs*{\BoA_{\nactions}(\equil_{\floor{\varepsilon\nactions}})}\mid \sigmaf_{\stime_{\varepsilon \nactions}}}\ind_{\evented}}+\nactions\Prob\parens*{\evented^{\rm c}},
\end{align}
where the inequality is due to the fact that $\BoA_{\nactions}(\equil_{\floor{\varepsilon\nactions}})\le\nactions^2$ almost surely. Notice further that, by \cref{le:Warnke}, it is possible to derive a quantitative version of \cref{eq:prob-A}, yielding
\begin{equation}
\lim_{\nactions\to\infty}\nactions\Prob\parens*{\evented^{\rm c}}=0.
\end{equation}
On the other hand, by neglecting the last term in \cref{eq:intermediate}, we obtain
\begin{align}
    \Expect&\bracks*{\frac{1}{\nactions}\abs*{\BoA_{\nactions}(\equil_{\floor{\varepsilon\nactions}})}}
    \ge\Expect\bracks*{\Expect\bracks*{\frac{1}{\nactions}\abs*{\BoA_{\nactions}(\equil_{\floor{\varepsilon\nactions}})}\mid \sigmaf_{\stime_{\varepsilon \nactions}}}\ind_{\evented}}.
\end{align}
The desired conclusion follows from these estimates and \cref{eq:prob-A}, since \cref{eq:expect-basin-tau} and \cref{le:approx-basin} imply
\begin{equation}
\label{eq:last-est}
   \lim_{\delta\to 0}\lim_{\nactions\to\infty} \Expect\bracks*{\Expect\bracks*{\frac{1}{\nactions}\abs*{\BoA_{\nactions}(\equil_{\floor{\varepsilon\nactions}})}\mid \sigmaf_{\stime_{\varepsilon \nactions}}}\ind_{\evented}}=\exp\braces*{\sqrt{1-2\varepsilon} }.
\end{equation}
\end{proof}

\begin{proof}[Proof of \cref{co:BoA-epsilon}]
The probability on the left-hand side of \cref{eq:BoA-epsilon-DF} can be rewritten as the expectation of
\begin{equation}
\label{eq:BRD<eK}  
\Prob\parens*{\BRD(2\nactions) = \equil_{j}, j \le \varepsilon\nactions \mid \sigmaf_{\nactions^{2}}}
=
\sum_{j=1}^{\varepsilon\nactions}\frac{\abs*{\BoA(\equil_{j})}}{\nactions^{2}}
\,.
\end{equation}
Therefore the convergence in distribution in \cref{eq:BoA-epsilon-DF} follows from the tower property and  \cref{th-BoA-espilon}. 
    To obtain the convergence of the expectation in  \cref{eq:BoA-epsilon-Expect} it is enough to use \cref{eq:BoA-epsilon-DF} and the fact that, by definition, $\rank(\BRD(2\nactions))/\nactions \le 1$.
\end{proof}

\begin{proof}[Proof of \cref{th:BoA-rank-potential}]
The process $\parens*{\nerows_{\per}}_{\per\ge 0}$ can be seen as a coupon-collector processes, where $\nerows_\per$ stands for the number of coupons collected by time $\per$. 
If we call $\coupont_{\genact}$ the time that is needed to collect the $\genact$-th coupon, then we have
\begin{equation}
\label{eq:expectation-coupon}  
\Expect\bracks*{\coupont_{\genact}}
=
\nactions(\harmonic_{\genact}-\harmonic_{\nactions-\genact}),
\end{equation}
where $\harmonic_{\genact}$ is the $\genact$-th harmonic number
\begin{equation}
\label{eq:harmonic}
\harmonic_{\genact}\coloneqq 
\sum_{i=1}^{\genact} \frac{1}{i}.   
\end{equation}
Moreover, for each $\genact\in\actions$, we have 
\begin{equation}
\label{eq:variance-coupon}   
\Var\bracks*{\coupont_{\genact}} \le 2\nactions^{2}.
\end{equation}
If, for some $\varepsilon\in(0,1)$, $\genact= \nactions-\nactions^\varepsilon$, then
\begin{equation}
\Expect[\coupont_{\genact}]\sim(1-\varepsilon)\nactions\log(\nactions),
\end{equation}
and Chebyshev inequality implies that,  for all $\delta>0$
\begin{equation}
\label{eq:coupon-Chebyshev}
\lim_{\nactions\to \infty} \Prob\left(\left|\frac{\coupont_{\genact}}{(1-\varepsilon)\nactions\log(\nactions)}-1\right|>\delta\right)=0\,,
\end{equation}
Notice that, the probability of having a Nash equilibrium with ranking larger than $\per$ is bounded above by the probability that the incremental construction in \cref{al:incremental-construction} does not stop before time $\per$, i.e.,
\begin{align}
\label{eq:P-lower-bound} 
\Prob\parens*{\max\braces*{\necolumns_{\floor*{(1+\delta)\nactions\log\nactions}},\nerows_{\floor*{(1+\delta)\nactions\log\nactions}}} = \nactions}
&\ge
\Prob\parens*{\nerows_{\floor*{(1+\delta)\nactions\log\nactions}} = \nactions}\\
&=\Prob\parens*{\coupont_{\nactions}\le\floor*{(1+\delta)\nactions\log\nactions} }\to 1,
\end{align}
where the asymptotic follows by \eqref{eq:coupon-Chebyshev}. This concludes the proof.
\end{proof}


\section{List of symbols}
\label{se:symbols}

\begin{longtable}{p{.13\textwidth} p{.82\textwidth}}

$\actA$ & action of player $\pA$\\

$(\eq\actA,\eq\actB)$ & \acl{NE}\\ 
$\evented$ & defined in \cref{eq:A-eps-delta}\\
$\actB$ & action of player $\pB$\\
$\BoA(\eq\actA,\eq\actB)$ & basin of attraction of $(\eq\actA,\eq\actB)$, defined in \cref{eq:BoA}\\
$\BRD$ & \acl{BRD}\\

$\snecolumns_{\per}$ & $1 - \expo^{-\peralt}$, defined in \cref{eq:sol-Cauchy-Warnke}\\

$\necolumns_{\per}$ & number of non-empty columns after placing the first $\per$ entries of $\potential$\\

$\succcol_{\per}$ & Bernoulli random variable such that 
$\Prob(\succcol_{\per}=1)=
\succprobcol_{\per}$, defined in \cref{eq:success-col-prob}\\

$\nsteps_{\per}$ & number of steps where \ref{it:srow0-noresample0} occurs\\
${\sigmaf_{\per}}$ & $ \sigma(\potential^{\per})$\\

$\sgreene_{\per}$ & $\frac{1}{2} \parens*{1-\expo^{-2\peralt}}$, defined in \cref{eq:sol-Cauchy-Warnke}\\

$\greene_{\per}$ & number of green entries after adding the first $\per$ entries of $\potential$\\
  
$\funch(\per)$ & $\sum_{i=1}^{\per}\floor*{\frac{i}{2}}$, defined in \eqref{eq:h(t)}\\

$\harmonic_{\nactions}$ & $\nactions$-th harmonic number\\

$\nactions$ & number of actions\\

$\actions$ & action set of each player\\
   
$\LKzeta$ & defined in \cref{eq:LKzeta}\\

$\ematrix_{\per}$ & sub-matrix of $\potential$ composed of rows $\braces*{1,\dots,\nerows_{\per}}$ and columns $\braces*{1,\dots,\necolumns_{\per}}$\\

$\NE_{\nactions}$ & set of \aclp{PNE}\\

$\snerows_{\per}$ & $1 - \expo^{-\peralt}$, defined in \cref{eq:sol-Cauchy-Warnke}\\
$\nerows_{\per}$ & number of non-empty rows after placing the first $\per$ entries of $\potential$\\

$\coupont_{\genact}$ & time needed to collect the $\genact$-th coupon\\
$\succrow_{\per}$ & Bernoulli random variable such that $\Prob(\succrow_{\per}=1)=
\succprobrow_{\per}$, defined in \cref{eq:success-row-prob}\\ 

$\noresampleprob_{\per+1} $ & $ 
1- \frac{\per}{\necolumns_{\per}\nactions}$, defined in \cref{eq:no-resamples} \\

$\noresample_{\per+1}$ & Bernoulli random variable such that $\Prob(\noresample_{\per+1}=1)=
\noresampleprob_{\per+1}$, defined in \cref{eq:no-resamples}\\ 


$\sampleX_{\per+1}, \sampleY_{\per+1}$ & random variables drawn uniformly without replacement on the empty cells of $\ematrix_{\per}$ \\

$\sampleZ_{\per+1}$ & random variable with uniform distribution on  $\bracks*{\necolumns_{\per}}$\\

$\funca$ & defined in \cref{eq:three-functions-def}\\
$\funcb$ & defined in \cref{eq:three-functions-def}\\
$\funcc$ & defined in \cref{eq:three-functions-def}\\


$\equil$ & \acl{NE}\\

$\outcome(\actA,\actB)$ & outcome of the profile $(\actA,\actB)$\\
$\succprobcol_{\per}$ & $\frac{(\nactions-\necolumns_{\per-1})}{\nactions}$, defined in \cref{eq:success-col-prob}\\
$\rank(\equil)$ & ranking of equilibrium $\equil$\\
$\succprobrow_{\per}$ & $\frac{(\nactions-\nerows_{\per})\nactions}{\nactions^{2}-\per}$, defined in \cref{eq:success-row-prob}\\
$\stime_{\varepsilon\nactions} $ &$ \inf \braces*{\per \ge 0 \colon \greene_{\per} = \floor{\varepsilon\nactions}}$, defined in \cref{eq:epsilon-time}\\
$\funcphi(\varepsilon)$ & $\exp\braces*{\sqrt{1-2\varepsilon}}$, defined in \cref{eq:BoA-epsilon}\\
$\funcPhi(\varepsilon)$ & $(1-\sqrt{1-2\varepsilon})\exp\braces*{\sqrt{1-2\varepsilon}}$, defined in \cref{eq:BoA-epsilon-DF}\\
$\potential$ & potential function, defined in \cref{eq:ordinal-potential}\\
$\eq\potential$ & $\braces*{\potential(\equil) \colon \equil \in \NE_{\nactions}}$, defined in \cref{eq:set-equilibrium-potential}\\
$\potential^{\per}$ & set of all entries added at times $\peralt \le \per$\\
$\prec_{\play}$ & preference relation of player $\play$\\

\end{longtable}

\subsection*{Acknowledgments}
Hlafo A. Mimun, Matteo Quattropani and Marco Scarsini are members of GNAMPA-INdAM.
Their work was partially supported by the  MIUR PRIN 2022EKNE5K ``Learning in markets and society''  and the INdAM GNAMPA project CUP\_E53C22001930001 ``Limiting behavior of stochastic dynamics in the Schelling segregation model.''


\bibliographystyle{apalike}
\bibliography{bibpotentialBRD.bib}

\end{document}